        \theoremstyle{plain}
\def\ds{\displaystyle}
\newcommand{\R}{{\mathbb{R}}}
\newcommand{\divv}{\nabla_v \cdot}
\newcommand{\divx}{\nabla_x \cdot}
\newcommand{\dt}{\partial_t}
\newcommand{\p}{\partial}
\newcommand{\Kn}{\mathrm{Kn}\thinspace}
\newcommand{\la}{\left\langle}
\newcommand{\ra}{\right\rangle}
\newcommand{\cint}[1]{\langle #1 \rangle}
\def\eps{\varepsilon}
        \theoremstyle{plain}
        \newtheorem{proposition}{Proposition}[section]
        \newtheorem{prop}{Property}[section]
        \theoremstyle{remark}
        \theoremstyle{remark}
        \theoremstyle{remark}
 \newcommand{\correct}[1]{{#1}}
\begin{document}

\begin{center}
{\bf BGK and Fokker-Planck models of the Boltzmann equation for gases
  with discrete levels of vibrational energy}

\vspace{1cm}
J. Mathiaud$^{1,2}$,  L. Mieussens$^2$

\bigskip
$^1$CEA-CESTA\\
15 avenue des sabli\`eres - CS 60001\\
33116 Le Barp Cedex, France\\
{ \tt(julien.mathiaud@cea.fr)}

\bigskip
$^2$Univ. Bordeaux, Bordeaux INP, CNRS, IMB, UMR 5251, F-33400 Talence, France.\\
{ \tt(Luc.Mieussens@math.u-bordeaux.fr)}

\end{center}

\begin{abstract}
We propose two models of the Boltzmann equation (BGK and Fokker-Planck
models) for rarefied flows of diatomic gases in vibrational non-equilibrium. These
models take into account the discrete repartition of vibration energy
modes, which is required for high temperature flows, like for
atmospheric re-entry problems. We prove that these models satisfy
conservation and entropy properties (H-theorem), and we derive their
corresponding compressible Navier-Stokes asymptotics.
\end{abstract}

 \bigskip

Keywords: Fokker-Planck model, BGK model, H-theorem, Rarefied Gas
Dynamics, vibrational molecules

\section{Introduction}

Numerical simulation of atmospheric reentry flows requires to solve
the Boltzmann equation of Rarefied Gas Dynamics. The standard method
to do so is the Direct Simulation Monte Carlo (DSMC) method
\cite{bird,BS_2017}, which is a particle stochastic method. However,
it is sometimes interesting to have alternative numerical methods,
like, for instance, methods based on a direct discretization of the
Boltzmann equation \correct{(see \cite{dimarco_pareschi_2014})}. This is hardly possible
for the full Boltzmann equation (except for monatomic gases,
see~\cite{luc_2014}), since this is still much too computationally
expensive for real gases. But BGK like model equations~\cite{bgk} are
very well suited for such deterministic codes: indeed, their
complexity can be reduced by the well known reduced distribution
technique~\cite{Chu_1965}, which leads to intermediate models between
the full Boltzmann equation and moment
models~\cite{Struchtrup_moment_book}. The Fokker-Planck
model~\cite{cercignani} is another model Boltzmann equation that can
give very efficient stochastic particle methods, see~\cite{Grj2011}.

These model equations have already been extended to polyatomic
gases, so that they can take 
into account the internal energy of rotation of gas molecules. They
contains correction terms that lead to correct transport coefficients: the ESBGK or
Shakhov's models~\cite{Holway,esbgk_poly,S_model}, and the cubic
Fokker-Planck and
ES-Fokker-Planck~\cite{Grj2011, Grj2013,Mathiaud2016,Mathiaud2017}.

For high temperature flows, like in space reentry problems, the
vibrational energy of molecules is activated, and has a significant
influence on energy transfers in the gas flow. It is therefore
interesting to extend the model equations to take this
vibrational modes into account. Several extended BGK models have been
recently proposed to do so, for
instance~\cite{RS_2014,WYLX_2017,ARS_2017,KKA_2019}, and a recent
Fokker-Planck model has been proposed earlier in~\cite{Grj2013}.

All these models assume a continuous vibrational energy
repartition. However, while transitional and rotational energies in
air can be considered as continuous for temperature larger than $1$K
and $10$K, respectively, vibrational energy can be considered as
continuous only for much larger temperatures ($2000$K for oxygen and
$3300$K for nitrogen). For flows up to $3000$K around reentry
vehicles, the discrete levels of vibrational energy must be used~\cite{anderson}. It
seems that that the only BGK model that allows for this discrete
repartition is the model of Morse~\cite{Morse_1964}.

In this paper, we consider a simpler version of this Morse BGK model
for vibrating gases that allows for a discrete vibrational energy. We
show that the complexity of this model can be reduced with the reduced
distribution technique so that the discrete vibrational energy is
eliminated. What is new here is that this construction allows us to
prove that the corresponding reduced model satisfies the
H-theorem. Moreover, the model is shown to give macroscopic Euler and
Navier-Stokes equations in the dense regime, with temperature
dependent heat capacities, as expected. This means that the reduced
model is a good candidate for its implementation in a deterministic
simulation code. \correct{Note that with this reduction, only higher order moments with respect to the vibration energy variable are lost: the macroscopic quantities of interest like pressure, temperature, and heat flux, are the same as in the non-reduced model. Moreover, since the reduced variable is not the velocity, this reduction does not require any assumption or special geometries}. 

An equivalent reduced
Fokker-Planck model is also proposed, that has the same
properties. However, this model is not based on a non-reduced model,
since we are not able so far to define diffusion process for the
discrete vibrational energy. Up to our knowledge, this is the first time
such a Fokker-Planck model for vibration energy is proposed.

Our paper is organized as follows. In section~\ref{sec:kinetic}, we
present the kinetic description of a gas with vibrating molecules, and we discuss the
mathematical properties of the reduced distributions that will be used for our
models. Our BGK and Fokker-Planck models are presented in
sections~\ref{sec:BGK} and~\ref{sec:FP}, respectively. In
section~\ref{sec:CE}, the hydrodynamic limits of our models, obtained by a
Chapman-Enskog procedure, are discussed. Section~\ref{sec:conclusion}
gives some perspectives of this work.

\section{Kinetic description of a vibrating diatomic gas}
\label{sec:kinetic}

\subsection{Distribution function and local equilibrium}
We consider a diatomic gas. We define $f(t,x,v,\eps,i)$ the mass
density of
molecules with position $x$, velocity $v$, internal energy $\eps$, and
in the $i$th vibrational energy level, such that the corresponding vibrational
energy is $iRT_0$, where $T_0 = h\nu/k$ is a characteristic vibrational
temperature of the molecule ($h$ and $k$
are the Planck and Boltzmann constant, while $\nu$ is the fundamental
vibrational frequency of the molecule).
 
The corresponding local equilibrium
distribution is defined by (see \cite{bird})
\begin{equation}
 M_{vib}[f](v,\eps,i) =\frac{\rho}{\sqrt{2\pi RT}^3}  \frac{ 1-e^{-T_0/T} }{RT} \exp\left(-\frac{\frac12|u - v|^2+\eps+iRT_0}{ RT}\right) .      
 \label{maxvib}
\end{equation}
Here, $\rho$ is the mass density of the gas, $T$ its temperature of
equilibrium and $u$ its mean velocity, defined below.

\subsection{Moments and entropy}
The macroscopic quantities are defined by moments of $f$ as follows: 
\begin{equation}  \label{eq-moments}
  \rho = \la f\ra_{v,\eps,i}, \qquad
 \rho u = \la vf\ra_{v,\eps,i}, \qquad 
\rho e =  \la \left(\frac12|v-u|^2+\eps+iRT_0 \right) f  \ra_{v,\eps,i}, 
\end{equation}
where we use the notation $ \la \phi
\ra_{v,\eps,i}=\sum_{i=0}^{\infty} \iint \phi(t,x,v,\eps,i)\, dv d\eps$ for any function $\phi$.

With standard Gaussian integrals and summation formula, it is easy to find that the moments
of the equilibrium $M_{vib}[f]$ satisfy:
\begin{equation*}
 \la M_{vib}[f] \ra_{v,\eps,i} = \rho, \qquad 
  \la v M_{vib}[f]  \ra_{v,\eps,i} = \rho u. 
\end{equation*}
At equilibrium, we can define the following energies of translation,
rotation, and vibration
\begin{align}
& \rho e_{tr}(T) = \la (\frac12(v-u)^2) M_{vib}[f]\ra_{v,\eps,i}=\frac32 \rho RT, \label{eq-etr}\\
& \rho e_{rot}(T) = \la \eps M_{vib}[f]\ra_{v,\eps,i}= \rho RT, \label{eq-rot} \\
& \rho e_{vib}(T) = \la (iRT_0 ) M_{vib}[f] \ra_{v,\eps,i}=
\rho\frac{RT_0}{e^{T_0/T}-1}
 = \frac{\delta(T)}{2}\rho RT,   \label{eq-evib}
\end{align}
where the number of degrees of freedom of vibrations is
\begin{equation}  \label{eq-deltaT}
\delta(T)=\frac{2T_0/T}{e^{T_0/T}-1},
\end{equation}
which is a non integer and
temperature dependent number, while the number of degrees of freedom is $3$ for
translation and $2$ for rotation.

The temperature $T$ is defined so that $M_{vib}[f]$ has the same
energy as $f$: 
\begin{equation*}
  \la (\frac12(v-u)^2+\eps+iRT_0)M_{vib}[f] \ra_{v,\eps,i}= \rho e,
\end{equation*}
which gives the non linear implicit definition of $T$: 
\begin{equation}  \label{eq-defT}
e  = \frac{5+\delta(T)}{2}RT. 
\end{equation}
Since the function  $T\rightarrow e$ is   monotonic, $T$ is
uniquely defined by~\eqref{eq-defT}. Moreover, note that $\delta(T)$
is necessarily between 0 and 2, which means that vibrations add at
most two degrees of freedom.

Finally, the entropy ${\cal H}(f)$ of $f$ is defined by ${\cal  H}(f)=   \la f\log f \ra_{v,\eps,i}.$

\subsection{Reduced distributions}
For computational efficiency, it is interesting to define marginal, or
reduced, distributions $F$ and $G$ by 
\begin{equation*}
F(t,x,v,\eps)=\sum_i f(t,x,v,\eps,i), \quad \text{ and } 
\quad  G(t,x,v,\eps)=\sum_i iRT_0 f(t,x,v,\eps,i).
\end{equation*}
The macroscopic variables defined by $f$ can be obtained through $F$
and $G$ only, as it is shown in the following proposition \correct{ by integrating with respect to $v$ and $\eps$ and using the definition~(\ref{eq-moments}) of the moments}.
\begin{proposition}[Moments of the reduced distributions]
The macroscopic variables $\rho$, $u$, and $e$, of $f$, defined
by~\eqref{eq-moments}, satisfy
\begin{equation}  \label{eq-mtsred} 
 \rho = \la F \ra_{v,\eps},\qquad
 \rho u = \la v F \ra_{v,\eps},\qquad
 \rho e = \la (\frac12(v-u)^2+\eps )  F\ra_{v,\eps}+\la G\ra_{v,\eps}.
\end{equation}
where we use the notation $ \la \psi
\ra_{v,\eps}=\iint \psi(t,x,v,\eps)\, dv d\eps$ for any function $\psi$.
\end{proposition}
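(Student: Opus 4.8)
The plan is to verify each of the three identities in~\eqref{eq-mtsred} directly, by unfolding the definitions of $F$ and $G$ and interchanging the summation over $i$ with the integration in $v$ and $\eps$. The only structural observation needed is that the kinetic and rotational energy $\frac12|v-u|^2+\eps$ carries no dependence on the vibrational level $i$, so it passes freely through the sum, whereas the vibrational contribution $iRT_0$ is precisely the weight absorbed into the definition of $G$.

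First I would treat the density and momentum. Starting from $\la F\ra_{v,\eps}=\iint \sum_i f\,dv\,d\eps$, I interchange the sum and the integral to obtain $\sum_i \iint f\,dv\,d\eps=\la f\ra_{v,\eps,i}=\rho$, which is exactly the definition~\eqref{eq-moments}. The momentum identity follows in the same way after multiplying by $v$, which is independent of $i$: $\la vF\ra_{v,\eps}=\sum_i \iint v f\,dv\,d\eps=\la vf\ra_{v,\eps,i}=\rho u$.

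For the energy I would split the right-hand side of the third identity into the $F$ part and the $G$ part. Since $\frac12|v-u|^2+\eps$ does not depend on $i$, one has $\la (\frac12(v-u)^2+\eps)F\ra_{v,\eps}=\sum_i \iint (\frac12|v-u|^2+\eps) f\,dv\,d\eps$, while the vibrational term gives $\la G\ra_{v,\eps}=\sum_i \iint iRT_0\, f\,dv\,d\eps$. Adding the two and recombining the integrand yields $\sum_i \iint (\frac12|v-u|^2+\eps+iRT_0) f\,dv\,d\eps=\la (\frac12(v-u)^2+\eps+iRT_0)f\ra_{v,\eps,i}=\rho e$, which is again~\eqref{eq-moments}.

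The one point requiring care, and the only genuine analytical step, is the interchange of the infinite sum over $i$ with the integrals in $v$ and $\eps$. Because $f$ is a mass density, and hence nonnegative, this interchange is justified for $\rho$ by Tonelli's theorem; for the momentum and energy moments it then follows by absolute convergence, under the implicit assumption that $f$ has finite moments of the orders appearing above, which is already required for the macroscopic quantities in~\eqref{eq-moments} to be well defined. Apart from this justification, the argument is a direct bookkeeping computation with no further obstacle.
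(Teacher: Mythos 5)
Your proof is correct and follows essentially the same route as the paper, which obtains \eqref{eq-mtsred} by direct integration over $v$ and $\eps$ and the definition \eqref{eq-moments}, with the sum over $i$ absorbed into $F$ and $G$. Your added justification of the sum--integral interchange via Tonelli's theorem and the nonnegativity of $f$ is a harmless (and welcome) refinement of the same bookkeeping argument.
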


This reduction procedure can be extended to the entropy functional as
follows. First, to simplify the following relations, we use the
notation $f_i(v,\eps)$ for $f(v,\eps,i)$. Then, we define the
reduced entropy by
  \begin{equation}\label{eq-HFG} 
    \begin{split}
& \mathcal{H}(F,G)= \la  H(F,G)\ra_{v,\eps}, \text{ where } \\
& H(F,G)=\inf_{f>0}\left\{ \sum_i f_i\log  f_i \quad
  \text{such that}
  \quad   \sum_i f_i = F, \quad 
 \sum_i iRT_0 f_i  = G \right\}.
    \end{split}
\end{equation}
In other words, for a given couple of reduced distributions $(F,G)$,
we define the (non reduced) distribution that minimizes the marginal
entropy $ \sum_i f_i\log  f_i$ among all the distributions that have
the same marginal distributions $F$ and $G$. Then the reduced entropy
is the integral with respect to $v$ and $\eps$ of the corresponding
marginal entropy.

Now it is possible to \correct{represent} this reduced entropy as a function of $F$
and $G$ only, as it is shown in the following proposition.
\begin{proposition}[Entropy]
The reduced entropy $\mathcal{H}(F,G)$ defined by~(\ref{eq-HFG}) is
\begin{eqnarray}
\mathcal{H}(F,G)= \la F\log(F)+ F\log\left(\frac {RT_0F}{RT_0F+G}\right)+\frac{G}{RT_0}\log\left(\frac {G}{RT_0F+G}\right) \ra_{v,\eps}.
\end{eqnarray}
\end{proposition}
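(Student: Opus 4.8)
The plan is to solve the constrained minimization problem defining $H(F,G)$ explicitly via Lagrange multipliers, and then substitute the optimal $f_i$ back into the objective. First I would fix a point $(v,\eps)$ and treat $H(F,G)$ as the minimum of $\sum_i f_i \log f_i$ over sequences $(f_i)_{i\geq 0}$ of positive numbers subject to the two linear constraints $\sum_i f_i = F$ and $\sum_i i RT_0 f_i = G$. Introducing multipliers $\lambda$ and $\mu$ for the two constraints, the stationarity condition $\log f_i + 1 = \lambda + \mu\, i R T_0$ forces the minimizer to have the geometric form $f_i = A\, \beta^{i}$ for some constants $A>0$ and $\beta\in(0,1)$ (with $\beta = e^{\mu R T_0}$, and convergence of the series requiring $\beta<1$). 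This is exactly the structure one expects, since it mirrors the Maxwell--Boltzmann weighting $e^{-iRT_0/RT}$ appearing in the equilibrium $M_{vib}[f]$ of~\eqref{maxvib}.

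Next I would pin down $A$ and $\beta$ from the two constraints. Summing the geometric series gives $\sum_i f_i = A/(1-\beta) = F$ and $\sum_i i R T_0 f_i = A R T_0\, \beta/(1-\beta)^2 = G$. Dividing the second relation by the first yields $R T_0\, \beta/(1-\beta) = G/F$, which I would solve for $\beta$ to obtain $\beta = G/(RT_0 F + G)$ and hence $1-\beta = RT_0 F/(RT_0 F + G)$; then $A = F(1-\beta) = (RT_0 F)^2/(RT_0F+G)\cdot\frac{1}{RT_0 F}$, i.e. $A = RT_0F^2/(RT_0F+G)$. These are precisely the quantities whose logarithms appear in the claimed formula, which is a good consistency check.

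The final step is to evaluate the objective at the optimizer. Using $\log f_i = \log A + i\log\beta$, I would write $\sum_i f_i\log f_i = (\log A)\sum_i f_i + (\log\beta)\sum_i i f_i = F\log A + \frac{G}{RT_0}\log\beta$, where the second sum is computed from $\sum_i i f_i = G/(RT_0)$. Substituting the expressions $A = RT_0 F^2/(RT_0F+G)$ and $\beta = G/(RT_0F+G)$ and simplifying $F\log A = F\log F + F\log\big(RT_0 F/(RT_0F+G)\big)$ reproduces the three-term integrand claimed in the proposition; integrating against $\la\,\cdot\,\ra_{v,\eps}$ then gives $\mathcal H(F,G)$.

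The main obstacle is not the algebra but justifying that the formal Lagrange solution is genuinely the global minimizer over an \emph{infinite-dimensional} space of nonnegative sequences, and that the relevant series converge. Since $x\mapsto x\log x$ is strictly convex and the feasible set $\{(f_i):f_i\ge 0,\ \sum f_i=F,\ \sum iRT_0 f_i=G\}$ is convex, the critical point is the unique global minimizer, so I would invoke convexity (e.g. via a Gibbs-type inequality comparing any feasible $f$ to the geometric $f^*$) rather than a second-order test. One must also check the constraints are feasible, which requires $G/F>0$ so that $\beta\in(0,1)$; this is automatic since $F,G>0$ by construction. I would flag the convergence/strict-convexity justification as the only genuinely delicate point and handle the rest as routine computation.
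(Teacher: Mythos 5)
Your proposal is correct and follows essentially the same route as the paper's proof: Lagrange multipliers force the geometric form $f_i = A\beta^i$ (the paper writes $\beta = e^{-BRT_0}$), the two constraints determine $A = RT_0F^2/(RT_0F+G)$ and $\beta = G/(RT_0F+G)$, and substitution yields the three-term integrand. Your closing remark on justifying global optimality via strict convexity and a Gibbs-type inequality is in fact slightly more careful than the paper, which only observes convexity of the constraint set and of $\sum_i f_i \log f_i$ before imposing stationarity.
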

\begin{proof}
  The set $\left\{f>0 \text{ such that } \sum_i f_i=F, \quad \sum_i iRT_0 f_i=G \right\}
  $ is clearly convex, so that we can use a Lagrangian multiplier
  approach by finding  \correct{if there exists a minimum} of the function $\mathcal{L}$ defined
  through :
\begin{eqnarray*}
\mathcal{L}(f,\alpha,\beta)=\sum_i f_i\log  f_i -\alpha \left( \sum_i
  f_i-F \right) 
-\beta  \left(\sum_i iRT_0 f_i-G \right),
\end{eqnarray*}
where $\alpha$ and $\beta$ are real numbers and \correct{$\sum_i f_i\log  f_i$ is a convex function of $f$} .
\correct{The functional $\mathcal{L}$ is convex but no longer strictly convex.  A minimum of  $\mathcal{H}(F,G)$ necessarily } satisfies $\frac{\partial \mathcal{L}}{\partial f} =
0$, and it is easy to deduce that $f$ can be written
$f_i(v,\eps)=A\exp\left(-iBRT_0\right)$, where $A:=A(v,\eps)$ and
$B:=B(v,\eps)$ are functions that are still to be determined. 

\correct{The linear constraints give:
\begin{align*}
& F=\sum_i f_i=\frac A {1-\exp\left(-BRT_0\right)},\\
& G=\sum_i  iRT_0f_i=\frac{ART_0\exp\left(-BRT_0\right)}{\left(1-\exp\left(-BRT_0\right)\right)^2},
\end{align*}
where we have used the property $iRT_0 f_i  = -\frac{df_i}{dB} $ that comes from $f_i=A\exp\left(-iBRT_0\right)$.}
Solving this linear system gives
\begin{equation*}
A=\frac{RT_0F^2}{RT_0F+G}, \qquad 
B=\frac 1{RT_0}\log\left(1+\frac{RT_0F}{G}\right).
\end{equation*}
so that 
\begin{eqnarray}
H(F,G)=F\log(F)+ F\log\left(\frac {RT_0F}{RT_0F+G}\right)+\frac{G}{RT_0}\log\left(\frac {G}{RT_0F+G}\right).
\end{eqnarray}
A final integration with respect to $v$ and $\eps$ gives the final result.
\end{proof}

{The
following proposition gives useful differential
properties of the reduced entropy functional. 
\begin{proposition}[Properties of $H$] \label{diffentrop}

\begin{enumerate}
\item  The partial derivatives of $H$ computed at $(F,G)$ are:
\begin{equation}  \label{eq-partialH}
  D_1H(F,G)=1+\log\left(\frac{RT_0F^2}{RT_0F+G}\right),
\quad D_2H(F,G)=\frac1{RT_0}\log\left(\frac{G}{RT_0F+G}\right).
\end{equation}
\item We \correct{denote by} $\ds \mathbb{H} =
  \left(\begin{smallmatrix}
D_{11}H(F,G)    & \quad D_{12}H(F,G) \\
 D_{12}H(F,G)  & \quad D_{22}H(F,G)
  \end{smallmatrix}\right)$
 the Hessian matrix of $H$. The second order derivatives are:
\begin{align*}
& D_{11}H(F,G)=\frac 2 F -\frac{RT_0}{RT_0F+G}, & D_{12}H(F,G)=-\frac1{RT_0F+G},\\
& D_{21}H(F,G)=D_{12}H(F,G), & D_{22}H(F,G)=\frac{F}{G(RT_0F+G)},
\end{align*}
and we have
\begin{equation}\label{eq-FDG} 
\begin{split}
& FD_{11}H(F,G)+GD_{21}H(F,G) = 1,\\
& FD_{12}H(F,G)+GD_{22}H(F,G) = 0.
\end{split}
\end{equation}

\item The function $(F,G) \mapsto H(F,G)$ is convex.
\end{enumerate}
\end{proposition}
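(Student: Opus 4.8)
The plan is to establish all three items by direct differentiation of the closed-form expression for $H$ obtained in the previous proposition, introducing the shorthand $S := RT_0F+G$ to keep the algebra under control. Since the formula for $H$ is an explicit combination of $F\log F$ and logarithms of $F$, $G$, and $S$, every derivative is elementary; the only art is in collapsing the rational terms that appear.

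For item 1, I would first expand using $\log(ab)=\log a+\log b$, writing $H = F\log F + F\log(RT_0F) - F\log S + \frac{G}{RT_0}\log G - \frac{G}{RT_0}\log S$, and then differentiate term by term. The key simplification is that, upon collecting the non-logarithmic terms produced by differentiating the two $\log S$ contributions, one always recovers a factor $\frac{RT_0F+G}{S}=1$; this makes the rational part collapse to the constant $1$ in $D_1H$ and to $0$ in $D_2H$, yielding exactly the stated expressions.

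For item 2, I would differentiate these first derivatives once more. Writing $D_1H = 1+\log(RT_0)+2\log F-\log S$ and $D_2H = \frac{1}{RT_0}(\log G-\log S)$ makes the four second-order derivatives immediate, with the mixed derivative $-1/S$ appearing symmetrically as required. The two identities in~\eqref{eq-FDG} are then pure substitution: in each case the $1/S$ terms recombine into $(RT_0F+G)/S=1$, so the first line gives $2-1=1$ and the second gives $-F/S+F/S=0$. (Alternatively, both identities follow at once by differentiating the quasi-homogeneity relation $H(\lambda F,\lambda G)=\lambda H(F,G)+\lambda F\log\lambda$, which reflects the scaling behaviour of the microscopic entropy $\sum_i f_i\log f_i$; but direct substitution is the shortest route.)

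For item 3, I would prove convexity through the Hessian $\mathbb{H}$ just computed, using the two standard criteria for a $2\times2$ symmetric matrix to be positive semidefinite. Positivity of the leading entry is clear, since $S>RT_0F$ forces $D_{11}H=\frac{2}{F}-\frac{RT_0}{S}>\frac{1}{F}>0$. The one computation requiring care is the determinant: placing everything over the common denominator $GS^2$ gives $GS^2\det\mathbb{H}=2S-(RT_0F+G)=S$, whence $\det\mathbb{H}=\frac{1}{G(RT_0F+G)}>0$, and $\mathbb{H}$ is in fact positive definite, so $H$ is (strictly) convex on $\{F,G>0\}$. I do not expect a genuine obstacle here, as everything reduces to bookkeeping with $S$; the determinant cancellation is simply the spot where a stray sign or factor is easiest to introduce. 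As a conceptual cross-check, convexity is guaranteed \emph{a priori} because $H$, by its definition~\eqref{eq-HFG} as a constrained infimum, is the partial minimization of the jointly convex map $f\mapsto\sum_i f_i\log f_i$ under linear constraints, and such infimal projections of convex functions are always convex.
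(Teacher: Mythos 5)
Your proof is correct and follows essentially the same route as the paper: items 1 and 2 by direct differentiation of the closed-form $H$, and item 3 by positive definiteness of the $2\times 2$ Hessian, with your determinant computation matching the paper's value $\det\mathbb{H}=\frac{1}{G(RT_0F+G)}$. The only cosmetic differences are that the paper certifies positive definiteness via positive trace plus positive determinant rather than Sylvester's leading-minor criterion, and your (valid) a priori convexity cross-check via infimal projection of the jointly convex map $f\mapsto\sum_i f_i\log f_i$ under linear constraints does not appear in the paper.
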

\begin{proof}
Points 1 and 2 are given by direct calculations. For point 3, note
that the determinant of the Hessian matrix $\mathbb{H}$, which is
$\det \mathbb{H}= \frac{1}{G(RT_0F+G)}$ is positive. Moreover, its
trace is positive too, so that the Hessian matrix is positive
definite, and hence the function $H$ is convex.
\end{proof}

Now, we want to use this reduced entropy to define the corresponding
reduced equilibrium. This is done by computing the minimum of the
reduced entropy among all the reduced distributions $(F_1,G_1)$
that have the same moments as $(F,G)$, as it is stated in the following proposition.
\begin{proposition}[Reduced equilibrium] \label{minentrop}
Let $(F,G)$ be a couple of reduced distributions and $\rho$, $\rho u$, and $\rho e$ its moments as defined by~\eqref{eq-mtsred}. Let ${\cal S}$ be the convex set defined by
\begin{equation*}
    \mathcal{S}=\left\{(F_1,G_1) \text{ such that} \la F_1\ra_{v,\eps}=\rho ,\quad
  \la vF_1\ra_{v,\eps}=\rho u, \quad 
  \la (\frac12|v|^2+\eps) F_1+G_1\ra_{v,\eps} = \rho e \right\}.
\end{equation*}
\begin{enumerate}
\item The minimum of 
  $\mathcal{H}$ on ${\cal S}$
is obtained for the couple $(M_{vib}[F,G], e_{vib}(T)M_{vib}[F,G])$
with
\begin{equation}  \label{eq-MvibFG}
M_{vib}[F,G]=\frac{\rho}{\sqrt{2\pi RT}^3} \exp\left(-\frac{|v - u|^2}{2 RT}\right)  
   \frac{1}{RT} \exp\left(-\frac{ \varepsilon}{RT}\right)
\end{equation}
where $e_{vib}(T)$ is the equilibrium vibrational energy defined by~(\ref{eq-evib}) and $\rho,u,T$ depend on $F$ and $G$ through the definition of the moments.

\item For every $(F_1,G_1)$ in $\mathcal{S}$, we have
\begin{equation*}
\begin{split}
 & D_1H(F_1,G_1)(M_{vib}[F,G]-F_1)+D_2H(F_1,G_1)(e_{vib}(T)M_{vib}[F,G]-G_1) \\
 & \qquad \leq  \ds H(M_{vib}[F,G], e_{vib}(T)M_{vib}[F,G]) -H(F_1,G_1) \leq 0.
\end{split}
\end{equation*}
\end{enumerate}
\end{proposition}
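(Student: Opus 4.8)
The plan is to treat both points as a single convex programming problem: we minimize the functional $\mathcal{H}(F_1,G_1)=\la H(F_1,G_1)\ra_{v,\eps}$, which is convex because $H$ is (Proposition~\ref{diffentrop}, point 3), over the set $\mathcal{S}$, which is cut out by the three \emph{linear} moment constraints on $(F_1,G_1)$. For a convex objective with affine constraints a stationary (Karush--Kuhn--Tucker) point is automatically a global minimizer, and strict convexity of $H$ (its Hessian is positive definite) makes it unique. So for point 1 it suffices to exhibit one critical point of the associated Lagrangian and to check that it lies in $\mathcal{S}$ and equals the announced couple.

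First I would introduce Lagrange multipliers $\lambda_0\in\R$, $\lambda_1\in\R^3$ and $\lambda_2\in\R$ for the mass, momentum and energy constraints that define $\mathcal{S}$, and set to zero the variations with respect to $F_1$ and $G_1$ of the Lagrangian $\mathcal{H}-\lambda_0\la F_1\ra_{v,\eps}-\lambda_1\cdot\la vF_1\ra_{v,\eps}-\lambda_2\la(\tfrac12|v|^2+\eps)F_1+G_1\ra_{v,\eps}$. Using the derivatives $D_1H,D_2H$ from Proposition~\ref{diffentrop}, the stationarity conditions read, pointwise in $(v,\eps)$,
\begin{align*}
D_1H(F_1,G_1)&=\lambda_0+\lambda_1\cdot v+\lambda_2(\tfrac12|v|^2+\eps),\\
D_2H(F_1,G_1)&=\lambda_2.
\end{align*}
The second equation forces $\frac1{RT_0}\log\!\big(G_1/(RT_0F_1+G_1)\big)$ to be independent of $(v,\eps)$, hence $G_1$ is a constant multiple of $F_1$, say $G_1=cF_1$. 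Substituting this into the first equation and using $D_1H(F_1,cF_1)=1+\log\!\big(RT_0F_1/(RT_0+c)\big)$ shows that $\log F_1$ is an affine function of $v$, $|v|^2$ and $\eps$, so $F_1$ is Gaussian in $v$ and exponential in $\eps$, i.e. of the form~\eqref{eq-MvibFG}, with $G_1$ proportional to it. The remaining work is bookkeeping: match $\lambda_0,\lambda_1,\lambda_2$ (equivalently $\rho,u,T$) to the moment constraints via the standard Gaussian integrals, and identify the proportionality constant as $c=e_{vib}(T)$ using the energy balance~\eqref{eq-evib} together with the definition~\eqref{eq-defT} of $T$. Convexity then promotes this critical point to the global minimizer, which proves point 1.

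For point 2, the left inequality is nothing but the tangent-line (subgradient) inequality for the convex function $H$. Restricting $H$ to the segment $s\mapsto\big((1-s)F_1+sM_{vib}[F,G],\,(1-s)G_1+s\,e_{vib}(T)M_{vib}[F,G]\big)$ gives a convex function $\Phi(s)$ of $s\in[0,1]$ whose derivative at $s=0$ is exactly the left-hand side of the claimed chain; convexity of $\Phi$ yields $\Phi'(0)\le\Phi(1)-\Phi(0)$, which is precisely the first inequality (valid pointwise in $(v,\eps)$). The right inequality is just the global minimality from point 1, read after integration in $(v,\eps)$: since $\big(M_{vib}[F,G],e_{vib}(T)M_{vib}[F,G]\big)$ minimizes $\mathcal{H}$ on $\mathcal{S}$ and $(F_1,G_1)\in\mathcal{S}$, we have $\mathcal{H}(M_{vib}[F,G],e_{vib}(T)M_{vib}[F,G])\le\mathcal{H}(F_1,G_1)$, i.e. the integrated form of $H(\cdots)-H(F_1,G_1)\le0$.

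The main obstacle I anticipate is not conceptual but lies entirely in point 1: rigorously justifying the formal Lagrange computation. One has to argue that the constrained infimum is attained, that the exponential ansatz is admissible (no issue from the positivity constraint $f>0$, integrability forcing $\lambda_2<0$), and then solve the Euler--Lagrange system cleanly so that the multipliers reproduce exactly the closed form~\eqref{eq-MvibFG} with $c=e_{vib}(T)$. Once the minimizer is pinned down, point 2 follows almost for free from the convexity of $H$ and from point 1.
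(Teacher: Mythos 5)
Your proposal is correct and follows essentially the same route as the paper: a Lagrange-multiplier treatment of the constrained entropy minimization whose stationarity conditions force $G_1$ proportional to $F_1$ and $\log F_1$ affine in $v$, $|v|^2$, $\eps$ (hence the Maxwellian form, pinned down by the moment constraints), followed by the convexity (tangent-line) inequality for $H$ plus the minimality property, read after integration in $(v,\eps)$, to obtain point 2. Your write-up is in fact somewhat more explicit than the paper's on why stationarity implies global minimality (convex objective with affine constraints) and on how the subgradient inequality yields the first inequality of point 2.
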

}

\begin{proof}
First, the set $\mathcal{S}$ is
clearly convex, and it is non empty, since it is easy to see that
$(M_{vib}],e_{vib}(T)M_{vib})$ realizes the moments $\rho$, $\rho u$,
and $\rho e$, and hence belongs to $\mathcal{S}$. 
Now, we define the following Lagrangian
\begin{equation*}
\begin{split}
\correct{\mathcal {L}}(F_1,G_1,\alpha,\beta,\gamma)=& \la
  H(F_1,G_1)\ra_{v,\eps}-\alpha(\la F_1\ra_{v,\eps}-\rho) \notag\\
 & -\beta\cdot(\la vF_1\ra_{v,\eps}-\rho u) -\gamma\left(\la
(\frac12|v|^2+\eps)F_1+G_1\ra_{v,\eps} - \rho e\right)
\end{split}
\end{equation*}
for every positive $(F_1,G_1)$, $\alpha\in \mathbb{R}$, $\beta\in
\mathbb{R}^3$, $\gamma\in \mathbb{R}$. The reduced entropy can reach   a
minimum of ${\mathcal S}$  \correct{when $\mathcal {L}$ has its first derivatives equal to zero: it is a minimum if it is unique
}.  \correct{Such a point}, denoted by $(F_1,G_1,\alpha,\beta,\gamma)$
for the moment, is characterised by the fact that the partial
derivatives of $\correct{\mathcal {L}}$ vanish at
$(F_1,G_1,\alpha,\beta,\gamma)$. This gives the following relations \correct{(using the cancellation of the $\mathcal {L} $ derivatives in $F_1,G_1,\alpha,\beta,\gamma$ respectively) }
\begin{eqnarray}
&&D_1H(F_1,G_1)=\alpha+\beta\cdot v+\gamma \frac12|v|^2, \label{Ent1}\\
&&D_2H(F_1,G_1)=\gamma,  \label{Ent2}\\
&&\la F_1\ra_{v,\eps}-\rho=0,\label{Ent3} \\
&&\la v F_1 \ra_{v,\eps}-\rho u=0, \label{Ent4}\\
&&\la (\frac12|v|^2+\eps)F_1+G_1 \ra_{v,\eps}-\rho e=0\label{Ent5},
\end{eqnarray}
where $D_1H$ and $D_2H$ are defined in~(\ref{eq-partialH}). \correct{For instance first relation comes from the fact that the derivative with respect to $F_1$ satisfies for every $\delta F_1$
\begin{equation*}
\begin{split}
& \partial_{{F_1}}{\cal
  L}(F_1,G_1,\alpha,\beta,\gamma)(\delta F_1) \\
& = \langle
(D_1H(F_1,G_1) - (\alpha + \beta \cdot v + \gamma (\frac12
|v|^2 + \varepsilon))) \delta F_1\rangle_{v,\varepsilon},
\end{split}
\end{equation*}
It is true for all $\delta F_1$ leading to the relation \ref{Ent1}.}

\

Now Combining equations (\ref{Ent1}) and (\ref{Ent2}), one gets that there
exists four real numbers $A$, $B$, $C$, $D$ and one vector $E\in
\mathbb{R}^3$, independent of $v$ and $\eps$, such that:
\begin{eqnarray*}
F_1&=&A\exp\left(E\cdot v +B|v|^2+C\eps\right ),\\
G_1&=&D F_1,
\end{eqnarray*}
where $B$ and $C$ are necessarily non positive to ensure the
integrability of $F_1$ and $G_1$.
It is then standard to use equations~(\ref{Ent3})--(\ref{Ent5}) to get
$F_1=M_{vib}(F,G)$ and $G_1=e_{vib}(T)M_{vib}(F,G)$.

Finally point 2 is a direct consequence of the convexity of $H$
and of the minimization property.

\end{proof}

\section{A BGK model with vibrations}
\label{sec:BGK}

With  the local equilibrium $M_{vib}[f]$ defined in~\eqref{maxvib}, it
is easy to derive the following BGK model:
\begin{equation}
 \p_t f + v \cdot \nabla f = \frac{1}{\tau}(M_{vib}[f] - f).
 \label{eq:vib}
\end{equation}
The macroscopic parameters $\rho$, $u$, and $T$ are defined through
the moments $\rho$, $\rho u$ and $\rho e$ of $f$
(see~(\ref{eq-moments})).

Like in the BGK model for monoatomic gases, it will be shown that the
relaxation time of this BGK model is $\tau=\mu/p$, where $p=\rho R T$
is the pressure and $\mu$ the viscosity, that can be temperature
dependent.

Now we have the following properties. 
\begin{prop}
\begin{itemize}
\item Conservation: for BGK model~\eqref{eq:vib} the mass, momentum
  and total energy are conserved:
\begin{equation*}
  \partial_t \la 
  \begin{pmatrix}
    1 \\ v \\ \frac12 |v|^2
  \end{pmatrix}
 f \ra_{v,\eps,i} 
+ \nabla_x \cdot \la 
v  \begin{pmatrix}
    1 \\ v \\ \frac12 |v|^2
  \end{pmatrix}
f \ra_{v,\eps,i}  = 0.
\end{equation*}
\item H-theorem: for the entropy $ {\cal H}(f)=  \la f \log f \ra_{v,\eps,i}$, we have
\begin{equation*}
  \partial_t {\cal H}(f) + \nabla_x \cdot \la vf \log f   \ra_{v,\eps,i} 
= \frac{1}{\tau}\la(M_{vib}[f] - f)\log f \ra_{v,\eps,i} \leq 0.
\end{equation*}
\end{itemize}
\end{prop}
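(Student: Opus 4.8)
The plan is to verify both the conservation laws and the H-theorem by the standard BGK strategy: integrate the kinetic equation~\eqref{eq:vib} against suitable weight functions and exploit the moment-matching and entropy-minimization properties of $M_{vib}[f]$ established in Section~\ref{sec:kinetic}. The key structural fact I would use throughout is that $M_{vib}[f]$ shares the moments $\rho$, $\rho u$, and $\rho e$ with $f$, which is exactly what makes the collision operator $\frac{1}{\tau}(M_{vib}[f]-f)$ conservative.

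For the conservation part, I would multiply~\eqref{eq:vib} by the collision invariants $\phi(v)=(1,v,\tfrac12|v|^2)^T$ and apply the bracket $\la\,\cdot\,\ra_{v,\eps,i}$. Since $\phi$ does not depend on $\eps$ or $i$, the transport term produces exactly the flux divergence $\nabla_x\cdot\la v\phi f\ra_{v,\eps,i}$. The point is then to show the right-hand side vanishes, i.e. $\la\phi(M_{vib}[f]-f)\ra_{v,\eps,i}=0$. This follows termwise from the moment identities: the mass and momentum identities $\la M_{vib}[f]\ra_{v,\eps,i}=\rho=\la f\ra_{v,\eps,i}$ and $\la vM_{vib}[f]\ra_{v,\eps,i}=\rho u=\la vf\ra_{v,\eps,i}$ are stated directly after~\eqref{maxvib}. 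For the energy, I would note that $\tfrac12|v|^2=\tfrac12|v-u|^2+u\cdot v-\tfrac12|u|^2$, so $\la\tfrac12|v|^2(M_{vib}[f]-f)\ra_{v,\eps,i}$ decomposes into pieces controlled by the mass, momentum, and total-energy identities; the total-energy match is precisely the defining relation~\eqref{eq-defT} for $T$. One subtlety I would flag is that the conserved energy here is only the translational part $\tfrac12|v|^2$, whereas $M_{vib}[f]$ matches the \emph{full} energy $\tfrac12|v-u|^2+\eps+iRT_0$; I would therefore track the internal-energy contributions carefully and confirm they cancel when combined, rather than asserting conservation of $\tfrac12|v|^2$ alone.

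For the H-theorem I would multiply~\eqref{eq:vib} by $\log f$, which is the functional derivative of the entropy density $f\log f$. Using $\partial_t(f\log f)=(1+\log f)\partial_t f$ together with $\la\partial_t f\ra_{v,\eps,i}=0$ (mass conservation) lets me replace $(1+\log f)$ by $\log f$ in the bracketed time derivative, yielding $\partial_t\mathcal{H}(f)=\la\log f\,\partial_t f\ra_{v,\eps,i}$; the transport term similarly produces $\nabla_x\cdot\la vf\log f\ra_{v,\eps,i}$ after the same $1+\log f$ reduction. This gives the stated balance with right-hand side $\frac{1}{\tau}\la(M_{vib}[f]-f)\log f\ra_{v,\eps,i}$.

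The main obstacle is the sign of this production term, $\frac{1}{\tau}\la(M_{vib}[f]-f)\log f\ra_{v,\eps,i}\le 0$. The clean way is to invoke convexity of $x\mapsto x\log x$ and the fact that $M_{vib}[f]$ minimizes the entropy $\la f\log f\ra_{v,\eps,i}$ among all distributions sharing the moments of $f$ (the analogue, at the non-reduced level, of Proposition~\ref{minentrop}). Concretely, I would use the elementary inequality $(b-a)\log a\le b\log b-a\log a$ for positive $a,b$ with $a=f$, $b=M_{vib}[f]$, giving $\la(M_{vib}[f]-f)\log f\ra_{v,\eps,i}\le\la M_{vib}[f]\log M_{vib}[f]-f\log f\ra_{v,\eps,i}=\mathcal{H}(M_{vib}[f])-\mathcal{H}(f)\le 0$, where the final inequality is the entropy-minimization property of $M_{vib}[f]$ over the moment-constrained set. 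I would make sure the logarithm of $M_{vib}[f]$ is affine in the collision invariants so that $\la M_{vib}[f]\log M_{vib}[f]\ra_{v,\eps,i}=\la f\log M_{vib}[f]\ra_{v,\eps,i}$ (again by moment matching), which is what turns the entropy difference into a manifestly nonpositive relative-entropy expression; handling the discrete index $i$ in $\log M_{vib}[f]$, which contributes the term $iRT_0/RT$ that is \emph{not} a standard collision invariant, is the delicate point and is exactly why the moment match must include the vibrational energy.
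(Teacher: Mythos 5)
Your overall strategy is exactly the one the paper has in mind (its proof is ``left to the reader,'' citing precisely the moment-matching definition of $M_{vib}[f]$ and the convexity of $x\log x$), and you correctly spot the one subtle point in the conservation part: the energy weight must be the full invariant $\frac12|v|^2+\eps+iRT_0$ rather than $\frac12|v|^2$ alone, since $M_{vib}[f]$ matches only the \emph{total} energy of $f$, not its translational part; the display in the proposition should be read that way.

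One step of your entropy argument is false as written, however: the ``elementary inequality'' $(b-a)\log a\le b\log b-a\log a$ does not hold pointwise (take $a=e$, $b=1$: the left side is $1-e$, the right side is $-e$). Convexity of $x\log x$ actually gives $(b-a)(1+\log a)\le b\log b-a\log a$, i.e. $(b-a)\log a\le b\log b-a\log a-(b-a)$, and the extra term $-(b-a)$ disappears only after integration, because $\la M_{vib}[f]-f\ra_{v,\eps,i}=0$. The same repair is needed where you reduce $(1+\log f)$ to $\log f$: $\la\partial_t f\ra_{v,\eps,i}$ is not zero (it equals $-\nabla_x\cdot\la vf\ra_{v,\eps,i}$); what kills the contribution of the constant $1$ is again the local identity $\la M_{vib}[f]-f\ra_{v,\eps,i}=0$ applied to the collision term. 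Both slips are fixed by the same moment-matching fact, so your proof does go through; indeed the cleanest route is the one you sketch at the end: since $\log M_{vib}[f]$ is affine in the collision invariants $(1,v,\frac12|v|^2+\eps+iRT_0)$, moment matching gives $\la(M_{vib}[f]-f)\log M_{vib}[f]\ra_{v,\eps,i}=0$, hence the production term equals $-\la(M_{vib}[f]-f)(\log M_{vib}[f]-\log f)\ra_{v,\eps,i}\le 0$ by monotonicity of the logarithm, with no detour through the entropy difference or the minimization property.
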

The proof relies on standard arguments (definition of $M_{vib}[f]$ and
convexity of $x\log x$) and is left to the reader.

\subsection{A reduced BGK model with vibrations}

For computational reasons, it is interesting to reduce the complexity
of model~\eqref{eq:vib} by using the usual reduced distribution
technique~\cite{HH_1968}. We define the reduced distributions
\begin{equation*}
  F=\sum_i f(t,x,v,\eps,i), \quad \text{ and } \quad G=\sum_i iRT_0 f(t,x,v,\eps,i),
\end{equation*}
and by summation of~\eqref{eq:vib} on $i$ we get the following closed system of two
reduced equations:
\begin{align}
& \p_t F + v\cdot\nabla_x F  =  \frac1 \tau \left( M_{vib}[F,G] - F \right) \, , \label{eq:eq_marginales_fvib}\\
&  \p_t G + v\cdot\nabla_x G  = \frac1 \tau ( \frac {\delta(T)} 2 RT M_{vib}[F,G] - G ) \, , \label{eq:eq_marginales_gvib}    
\end{align}
where the reduced Maxwellian is
\begin{equation*}
M_{vib}[F,G]=\frac{\rho}{\sqrt{2\pi RT}^3} \exp\left(-\frac{|v - u|^2}{2 RT}\right)   \frac{1}{RT} \exp\left(-\frac{ \varepsilon}{RT}\right),
\end{equation*}
and the macroscopic quantities are defined by
\begin{equation}\label{eq-macro_reduced} 
 \rho = \la F \ra_{v,\eps}, \qquad \rho u= \la vF \ra_{v,\eps},
 \qquad 
 \rho e= \la (\frac12(v-u)^2+\eps)F \ra_{v,\eps}+\la G\ra_{v,\eps},
\end{equation}
and $T$ is still defined by~\eqref{eq-defT} \correct{which implies that $T$ depends both on $F$ and $G$: to avoid
the heavy notation $T[F,G]$, it will still be  denoted by $T$ in the
following.}

Note that this model can easily be reduced once again to eliminate the
rotational energy variable. This gives a reduced system of three
BGK equations, with three distributions.

It is interesting to compare our new model to the work
of~\cite{Mathiaud2018} and~\cite{KKA_2019}: in these recent papers, the
authors also proposed, independently, BGK and ES-BGK models for temperature
dependent \correct{$\delta(T)$}, like in the case of vibrational energy. However,
they are not based on an underlying discrete vibrational energy
partition, and the authors are not able to prove any H-theorem. Only
a local entropy dissipation can be proved. The advantage of our new
approach is that the reduced model, which is continuous in energy too,
inherits the entropy property from the non-reduced model, and hence a
H-theorem, as it is shown below.

\subsection{Properties of the reduced model}

System~(\ref{eq:eq_marginales_fvib}--\ref{eq:eq_marginales_gvib})
naturally satisfies local conservation laws of mass, momentum, and
energy. Moreover, the H-theorem holds with the reduced entropy 
$H(F,G)$ as defined in~(\ref{eq-HFG}). Indeed, we have the 
\begin{proposition}
The reduced BGK
system~(\ref{eq:eq_marginales_fvib}--\ref{eq:eq_marginales_gvib})
satisfies the H-theorem
\begin{equation*}
\partial_t {\cal H}(F,G) +\divx\la v H(F,G)\ra_{v,\eps}\leq 0,
\end{equation*}
where  ${\cal H}(F,G)$ is the reduced entropy defined in~\eqref{eq-HFG}.
\end{proposition}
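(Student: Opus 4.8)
The plan is to differentiate the reduced entropy along the flow and then reduce everything to the convexity and minimization properties already established in Proposition~\ref{minentrop}. Since $H$ is a smooth function of its two arguments $(F,G)$, the chain rule gives
\begin{equation*}
\dt H(F,G) + v\cdot\nabla_x H(F,G) = D_1H(F,G)\bigl(\dt F + v\cdot\nabla_x F\bigr) + D_2H(F,G)\bigl(\dt G + v\cdot\nabla_x G\bigr).
\end{equation*}
Substituting the two reduced BGK equations~\eqref{eq:eq_marginales_fvib}--\eqref{eq:eq_marginales_gvib}, and recalling that $\frac{\delta(T)}{2}RT = e_{vib}(T)$ by~\eqref{eq-evib}, the right-hand side becomes
\begin{equation*}
\frac1\tau\Bigl[ D_1H(F,G)\bigl(M_{vib}[F,G] - F\bigr) + D_2H(F,G)\bigl(e_{vib}(T)M_{vib}[F,G] - G\bigr)\Bigr].
\end{equation*}

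Next I would integrate this identity in $v$ and $\eps$. Because $v$ is independent of $x$ and both $\dt$ and $\nabla_x$ commute with $\la\,\cdot\,\ra_{v,\eps}$, the left-hand side is exactly $\dt\mathcal{H}(F,G) + \divx\la vH(F,G)\ra_{v,\eps}$, so the entropy balance reads
\begin{equation*}
\dt\mathcal{H}(F,G) + \divx\la vH(F,G)\ra_{v,\eps} = \frac1\tau\la D_1H(F,G)(M_{vib}[F,G]-F) + D_2H(F,G)(e_{vib}(T)M_{vib}[F,G]-G)\ra_{v,\eps}.
\end{equation*}
It then remains to prove that the right-hand side is nonpositive.

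The key step is to invoke Proposition~\ref{minentrop}, point~2, with the choice $(F_1,G_1)=(F,G)$, which is legitimate since $(F,G)$ belongs to its own moment set $\mathcal{S}$. Its first inequality, pointwise in $(v,\eps)$ and nothing but the gradient inequality for the convex function $H$ (see Proposition~\ref{diffentrop}, point~3), bounds the integrand above by $H(M_{vib}[F,G],e_{vib}(T)M_{vib}[F,G]) - H(F,G)$. Integrating in $v$ and $\eps$ turns this upper bound into $\mathcal{H}(M_{vib}[F,G],e_{vib}(T)M_{vib}[F,G]) - \mathcal{H}(F,G)$, which is nonpositive by the minimization property of point~1: the reduced equilibrium minimizes $\mathcal{H}$ over $\mathcal{S}$, and $(F,G)\in\mathcal{S}$. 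Since $\tau>0$, this yields the claimed inequality.

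I do not expect a genuine obstacle, since the analytical difficulty has been front-loaded into Proposition~\ref{minentrop}; what remains is a chain-rule computation followed by a direct application of that result. The only point that deserves care is the combination of a \emph{pointwise} convexity estimate with an \emph{integral} minimization statement: one must first integrate the convexity bound over $v$ and $\eps$, and only then use that the equilibrium minimizes the integrated entropy $\mathcal{H}$, rather than attempting to compare $H$ pointwise at the equilibrium and at $(F,G)$.
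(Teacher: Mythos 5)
Your proof is correct and follows essentially the same route as the paper's: differentiate $\mathcal{H}$ along the flow, substitute the two reduced BGK equations, identify $\frac{\delta(T)}{2}RT = e_{vib}(T)$, and conclude by applying point~2 of Proposition~\ref{minentrop} with $(F_1,G_1)=(F,G)$. Your closing caveat---integrate the pointwise convexity bound over $(v,\eps)$ first, and only then invoke the minimization of the integrated entropy $\mathcal{H}$---is exactly the right reading of that proposition, a subtlety the paper's one-line justification passes over silently.
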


\begin{proof}
By differentiation we get
\begin{equation*}
\begin{split}
& \partial_t{\cal H}(F,G) +\divx\la v H(F,G)\ra_{v,\eps} \\
&  =  
\la  D_1H(F,G)(\partial_t F + v\nabla_x F)  
  +  D_2H(F,G)(\partial_t G + v\nabla_x G)
\ra_{v,\eps}  \\
&  =  
 \frac1 \tau \la  D_1H(F,G) ( M_{vib}[F,G] - F )
  +  D_2H(F,G)(\frac {\delta(T)} 2 RT M_{vib}[F,G] - G )
\ra_{v,\eps}  \\
&  \leq  0
\end{split}
\end{equation*}
where we have
used~(\ref{eq:eq_marginales_fvib}--\ref{eq:eq_marginales_gvib}) to
replace the transport terms by relaxation ones, and point 5 of
proposition~\ref{minentrop} to obtain the inequality.
\end{proof}

\section{A  Fokker-Planck model with vibrations}
\label{sec:FP}

It is difficult to derive a Fokker-Planck model for the distribution
function $f$ with discrete energy levels. We find it easier to
directly derive
a reduced model, by analogy with the reduced BGK
model~(\ref{eq:eq_marginales_fvib}--\ref{eq:eq_marginales_gvib}) and by
using our previous work~\cite{Mathiaud2017} on
a Fokker-Planck model for polyatomic gases. 
\correct{We remind that the original Fokker-Planck model for monoatomic gas can be derived from
the Boltzmann collision operator under the assumption of small
velocity changes through collisions and additional equilibrium
assumptions (see~\cite{cercignani}). In practice, the agreement
of this model with the Boltzmann equation is observed even when the
gas is far from equilibrium (see~\cite{Grj2011}, for instance).}


\subsection{A  reduced Fokker-Planck model with vibrations}
\label{subsec:FP}

First, we remind the Fokker-Planck model for a diatomic gas (without
vibrations) obtained in~\cite{Mathiaud2017}: 
\begin{equation}\label{eq-FPpoly} 
 \p_t f + v\cdot\nabla_x f  =  D(f),
\end{equation}
where $f=f(t,x,v,\eps)$ and the collision operator is 
\begin{equation*}  
D(f)=\frac{1}{\tau}\left(\divv\bigl((v-u)f+RT\nabla_v
  f\bigr)+2\partial_{\eps}( \eps f+  RT \eps\partial_{\eps} f)\right),
\end{equation*}
where the macroscopic values are
\begin{equation*}
\rho = \la f \ra_{v,\eps}\quad , \quad \rho u= \la fv \ra_{v,\eps}\quad ,\quad 
\rho e= \la f\left(\frac12(v-u)^2+\eps\right)\ra_{v,\eps}=\frac52\rho RT .  
\end{equation*}
The internal energy $\eps$ can be eliminated by the reduction technique
(integration w.r.t $d\eps$ and $\eps d\eps$) to get
\begin{align*}
& \p_t  {\cal F} + v\cdot\nabla_x  {\cal F}  =  D_1( {\cal F}, {\cal G}),  \\
& \p_t  {\cal G} + v\cdot\nabla_x  {\cal G}  =  D_2( {\cal F}, {\cal G}) , 
\end{align*}
with the collision operators
\begin{align*}
&D_{\cal F}( {\cal F}, {\cal G})=\frac{1}{\tau}\divv\bigl((v-u) {\cal F}+RT\nabla_v  {\cal F}\bigr),\\
&D_{\cal G}( {\cal F}, {\cal G})=\frac{1}{\tau}\divv\bigl((v-u) {\cal G}+RT\nabla_v  {\cal G}\bigr)
 +\frac2{\tau}\left(RT {\cal F}- {\cal G}\right).
\end{align*}
Note that the two velocity drift-diffusion terms in the two previous equations
have exactly the same structure as the one in the non-reduced
model~(\ref{eq-FPpoly}). However, it is interesting to note that the
energy drift-diffusion term of~(\ref{eq-FPpoly}) gives, after
reduction, a relaxation operator in the ${\cal G}$ equation.   \correct{Moreover by reducing  the model we lose some moments of initial distribution functions (higher moments in internal energy notably) but we are still able to capture energies and fluxes which are generally the main quantities of interest}.

By analogy, now we propose the following reduced Fokker-Planck model for a
diatomic gas with vibrations. Note that now, the model is still with
variables $x$, $v$, and $\eps$: only the discrete energy levels $i$
are eliminated. 
This model is 
\begin{align}
&  \p_t F + v\cdot\nabla_x F  =  D_F(F,G),\label{eq-FPF}  \\
&  \p_t G + v\cdot\nabla_x G  =  D_G(F,G) \label{eq-FPG},     
\end{align}
with
\begin{equation}\label{eq-DFDG}   
\begin{split}
& D_F(F,G)=\frac{1}{\tau}\left(\divv\bigl((v-u)F+RT\nabla_v
  F\bigr)+2\partial_{\eps}( \eps F+  RT \eps\partial_{\eps} F)\right),\\
& D_G(F,G)= \frac{1}{\tau}\left(\divv\bigl((v-u)G+RT\nabla_v
  G\bigr)+2\partial_{\eps}(\eps G+  RT \eps\partial_{\eps} G)\right)
 +\frac2{\tau}\left(e_{vib}(T)F-G\right),
\end{split}
\end{equation}
where the macroscopic values are defined as
in~(\ref{eq-macro_reduced}) and ~\eqref{eq-defT}.
\correct{Again, note that the temperature $T$ depends on $F$ and $G$.}

\correct{Note that we do not derive
this reduced Fokker-Planck model directly from a model with discrete
vibrational energy as for the BGK model,
since we are not able so far to define a discrete diffusion
operator. As mentioned above, this model is obtained by analogy with the
Fokker-Plank model proposed for polyatomic gases. Its derivation from
reduction of a discrete in energy Fokker-Plank model will be studied
in a future work.}

\subsection{Properties of the reduced model}

Using direct calculations and dissipation properties as in \cite{Mathiaud2017} we can prove the following proposition.
\begin{proposition} \label{prop:FP}
The collision operator conserves the mass, momentum, and energy:
\begin{equation*}
  \la(1,v)D_F(F,G)\ra_{v,\eps} = 0 \quad \text{ and } \quad
\la(\frac12 |v|^2+ \eps)D_F(F,G) + D_G(F,G)\ra_{v,\eps} = 0,
\end{equation*}
the reduced entropy $\mathcal{H}(F,G)$ satisfies the H-theorem:
\begin{equation*}
\partial_t \mathcal{H}(F,G) +\divx\la v H(F,G)\ra_{v,\eps}=
  \mathcal{D}(F,G)\leq 0,
\end{equation*}
and we have the equilibrium property
\begin{equation*}
  (D_F(F,G) = 0 \text{ and } D_G(F,G) = 0) \Leftrightarrow
(F = M_{vib}[F,G]  \text{ and }  G = e_{vib(T)}M_{vib}[F,G] ).
\end{equation*}

\end{proposition}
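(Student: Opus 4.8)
The plan is to prove the three assertions by testing the operators against the collision invariants (conservation), by computing the entropy production directly and reorganizing it into a sum of nonpositive terms (H-theorem), and then by identifying the zero set of that production (equilibrium). Throughout I would integrate by parts in $v$ and in $\eps$, using that the $\eps$-fluxes carry an explicit factor $\eps$, so no boundary term survives at $\eps=0$, and that the Gaussian/exponential decay kills the boundaries at infinity.

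For conservation, testing $D_F(F,G)$ against $1$ and against $v$ annihilates the two drift--diffusion operators as exact divergences; the momentum identity uses $\la(v-u)F\ra_{v,\eps}=\rho u-\rho u=0$ after one integration by parts. The only delicate invariant is energy. Testing $D_F$ against $\tfrac12|v|^2+\eps$ and integrating by parts yields the translational and rotational exchange $\tfrac1\tau\bigl(5RT\rho-2\la\tfrac12|v-u|^2F\ra_{v,\eps}-2\la\eps F\ra_{v,\eps}\bigr)$, while $\la D_G\ra_{v,\eps}=\tfrac2\tau\bigl(e_{vib}(T)\rho-\la G\ra_{v,\eps}\bigr)$ supplies the vibrational exchange. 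Adding these and using $\rho e=\la(\tfrac12|v-u|^2+\eps)F\ra_{v,\eps}+\la G\ra_{v,\eps}$ together with $e=\tfrac{5+\delta(T)}2RT$ and $e_{vib}(T)=\tfrac{\delta(T)}2RT$ from~\eqref{eq-evib}--\eqref{eq-defT}, the total collapses to $0$: energy conservation is exactly the content of the implicit definition of $T$.

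For the H-theorem, as in the reduced BGK proof, differentiating and substituting~\eqref{eq-FPF}--\eqref{eq-FPG} gives $\p_t\mathcal H(F,G)+\divx\la vH(F,G)\ra_{v,\eps}=\la D_1H\,D_F+D_2H\,D_G\ra_{v,\eps}=:\mathcal D(F,G)$. Writing $M:=M_{vib}[F,G]$ and using $\nabla_v M=-\tfrac{v-u}{RT}M$, $\p_\eps M=-\tfrac1{RT}M$, I would put each flux in Maxwellian-relative form, $(v-u)F+RT\nabla_v F=RT\,M\,\nabla_v(F/M)$ and $\eps F+RT\eps\p_\eps F=RT\eps\,M\,\p_\eps(F/M)$ (and likewise for $G$). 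After one integration by parts, the chain rule $\nabla_v(D_1H)=D_{11}H\nabla_v F+D_{12}H\nabla_v G$ and the contractions $FD_{11}H+GD_{21}H=1$, $FD_{12}H+GD_{22}H=0$ of~\eqref{eq-FDG} turn the diffusive contributions into the nonpositive Fisher-type terms $-\tfrac{RT}\tau\la M^2\mathbb Q_v\ra_{v,\eps}$ and $-\tfrac{2RT}\tau\la\eps M^2\mathbb Q_\eps\ra_{v,\eps}$, where $\mathbb Q_v=D_{11}H|\nabla_v(F/M)|^2+2D_{12}H\nabla_v(F/M)\cdot\nabla_v(G/M)+D_{22}H|\nabla_v(G/M)|^2\ge0$ since $\mathbb H$ is positive definite (Proposition~\ref{diffentrop}) and $\mathbb Q_\eps$ is its $\eps$-analogue. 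This reorganization also leaves first-order remainders $-\tfrac{3\rho}\tau+\tfrac{2}{\tau RT}\la\tfrac12|v-u|^2F\ra_{v,\eps}$ and $-\tfrac{2\rho}\tau+\tfrac{2}{\tau RT}\la\eps F\ra_{v,\eps}$.

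The remaining, genuinely new, piece is the relaxation term $\tfrac2\tau\la D_2H(F,G)\,(e_{vib}(T)F-G)\ra_{v,\eps}$, which is not sign-definite on its own; handling it is the main obstacle. The key observation I would use is that $D_2H(F,e_{vib}(T)F)=-\tfrac1{RT}$ identically in $F$ (a direct computation from~\eqref{eq-partialH} and $e_{vib}(T)=RT_0/(e^{T_0/T}-1)$), while $D_2H$ is strictly increasing in $G$ because $D_{22}H>0$; hence pointwise $\bigl(D_2H(F,G)+\tfrac1{RT}\bigr)(e_{vib}(T)F-G)\le0$. Splitting the relaxation accordingly, the "$+\tfrac1{RT}$" part contributes the nonpositive quantity $\tfrac2\tau\la(D_2H+\tfrac1{RT})(e_{vib}(T)F-G)\ra_{v,\eps}$, and the constant part $-\tfrac{2}{\tau RT}\bigl(e_{vib}(T)\rho-\la G\ra_{v,\eps}\bigr)$ combines with the remainders above. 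Using once more $\la\tfrac12|v-u|^2F\ra_{v,\eps}+\la\eps F\ra_{v,\eps}+\la G\ra_{v,\eps}=\rho e$ and $e-e_{vib}(T)=\tfrac52RT$ from~\eqref{eq-defT}, all these first-order remainders cancel exactly, leaving $\mathcal D(F,G)=-\tfrac{RT}\tau\la M^2\mathbb Q_v\ra_{v,\eps}-\tfrac{2RT}\tau\la\eps M^2\mathbb Q_\eps\ra_{v,\eps}+\tfrac2\tau\la(D_2H+\tfrac1{RT})(e_{vib}(T)F-G)\ra_{v,\eps}\le0$. The delicate points are precisely this bookkeeping: spotting the monotonicity inequality that tames the relaxation term, and checking that the leftover first-order terms cancel thanks to the definition of $T$.

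Finally, for the equilibrium characterization, the implication $(\Leftarrow)$ is a one-line substitution: for $F=M$ and $G=e_{vib}(T)M$ every flux $(v-u)M+RT\nabla_v M$ and $\eps M+RT\eps\p_\eps M$ vanishes and $e_{vib}(T)F-G=0$, so $D_F=D_G=0$. For $(\Rightarrow)$, $D_F=D_G=0$ forces $\mathcal D(F,G)=0$; since $\mathcal D$ is the sum of the three nonpositive terms above, each must vanish. Vanishing of the two Fisher terms gives $\nabla_v(F/M)=\p_\eps(F/M)=0$ and the same for $G/M$, so $F/M$ and $G/M$ are constant in $(v,\eps)$; vanishing of the relaxation term with the strict monotonicity of $D_2H$ in $G$ gives $G=e_{vib}(T)F$. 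Imposing $\la F\ra_{v,\eps}=\rho=\la M\ra_{v,\eps}$ fixes the constant to $1$, whence $F=M$ and $G=e_{vib}(T)M$.
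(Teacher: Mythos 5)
Your proposal is correct, and all three parts check out: the conservation computation (including the pleasant observation that energy conservation is exactly the implicit definition of $T$), the entropy production estimate, and the equilibrium characterization. The core machinery of your H-theorem is the same as the paper's — convexity of $H$ via the positive-definite Hessian $\mathbb{H}$, the contraction identities~\eqref{eq-FDG}, and a pointwise sign argument for the relaxation term — but the organization genuinely differs. The paper keeps the fluxes in raw form, introduces the vectors $V_i=(F(v_i-u_i)+RT\partial_{v_i}F,\,G(v_i-u_i)+RT\partial_{v_i}G)$ and $E$, bounds $\tau\mathcal{D}$ by \emph{dropping} the Hessian terms as an inequality, and only then factorizes what remains into $2\la(e_{vib}(T)F-G)\bigl(D_2H(F,G)+\tfrac1{RT}\bigr)\ra_{v,\eps}$, whose integrand is shown nonpositive by a sign equivalence. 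You instead rewrite the fluxes in Maxwellian-relative form $RT\,M_{vib}\nabla_v(F/M_{vib})$, which, after splitting off the constant $-\tfrac1{RT}$ from $D_2H$, makes all first-order remainders cancel \emph{exactly} (via $e-e_{vib}=\tfrac52 RT$) and yields an identity expressing $\mathcal{D}$ as a sum of three manifestly nonpositive pieces; your way of proving the sign of the relaxation term — the identity $D_2H(F,e_{vib}(T)F)=-\tfrac1{RT}$ combined with strict monotonicity of $D_2H$ in $G$ — is the same fact as the paper's, phrased more conceptually. This exact decomposition buys you the equilibrium property essentially for free: $D_F=D_G=0$ forces $\mathcal{D}=0$, hence each nonpositive piece vanishes, which is a more unified route than the paper's, which proves equilibrium independently by testing $D_F$ against $F/M_{vib}$ and $D_G$ against $G/(e_{vib}M_{vib})$ and invoking the moment constraints (note the paper's H-theorem, stated with an inequality after discarding terms, could not be recycled this way). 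The paper's equilibrium argument, on the other hand, is self-contained and needs only simple weighted-$L^2$ manipulations rather than the full entropy bookkeeping. The only blemishes in your write-up are cosmetic: a few $\tau$-normalization inconsistencies between the remainders (written for $\mathcal{D}$) and the surrounding identity (written for $\tau\mathcal{D}$), and you leave the $\eps$-analogue quadratic form implicit; neither affects correctness.
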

\begin{proof}
The conservation property is the consequence of direct integration
of~\eqref{eq-DFDG}. The equilibrium property can be proved as
follows. First, note that the Maxwellian $M_{vib}[F,G]$ can be written
as
\begin{equation*}
  M_{vib}[F,G] = \frac{\rho}{(2\pi)^{3/2} (RT)^{5/2}} \exp\left(
-\frac12 
\begin{pmatrix}
  v-u \\ 2\eps
\end{pmatrix}^T
\Omega^{-1} 
\begin{pmatrix}
  v-u \\ 2\eps
\end{pmatrix}\right),
\end{equation*}
with $\Omega =
\begin{pmatrix}
  RT & 0 \\ 0 & 2\eps RT
\end{pmatrix}
$. To shorten the notations, $M_{vib}[F,G]$ will be simply
denoted by $M_{vib}$ below, and $e_{vib}(T)$ will be simply denoted by
$e_{vib}$ as well. Then the collision operators can be written in the compact form
\begin{equation*}
\begin{split}
& D_F(F,G)=\frac{1}{\tau}\nabla_{v,\eps} \cdot
\left( \Omega M_{vib} \nabla_{v,\eps} \frac{F}{M_{vib} }  \right),\\
& D_G(F,G)= \frac{1}{\tau}\nabla_{v,\eps} \cdot
\left( \Omega M_{vib} \nabla_{v,\eps} \frac{G}{M_{vib} }  \right)
 +\frac2{\tau}\left(e_{vib}F-G\right).
\end{split}
\end{equation*}
Then an integration by part gives the following identity for
$D_F(F,G)$: 
\begin{equation*}
  \la D_F(F,G) \frac{F}{M_{vib}}\ra_{v,\eps} =  
- \frac{1}{\tau}\la \left(\nabla_{v,\eps} \frac{F}{M_{vib} } \right)^T\Omega M_{vib} \nabla_{v,\eps} \frac{F}{M_{vib} } \ra_{v,\eps}.
\end{equation*}
Consequently, if $D_F(F,G)=0$, 
since the integrand in the previous relation is a definite positive
form, the gradient is necessarily zero, and hence
$F=M_{vib}$. 
For the equilibrium property of $G$, the proof is a bit more
complicated. First, we have 
\begin{equation*}
  \la D_G(F,G) \frac{G}{e_{vib}M_{vib}}\ra_{v,\eps} =  
- \frac{1}{\tau e_{vib}}\la \left(\nabla_{v,\eps}
\frac{G}{M_{vib} }\right)^T \Omega M_{vib} \nabla_{v,\eps}
\frac{G}{M_{vib} } \ra_{v,\eps} 
+\la \frac2{\tau}\left(e_{vib}F-G\right)\frac{G}{e_{vib}M_{vib}}\ra_{v,\eps}.
\end{equation*}
Consequently, if $D_G(F,G)=0$, and since $F=M_{vib}$, we have
\begin{equation*}
\begin{split}
  \frac{1}{e_{vib}}\la \left(\nabla_{v,\eps}
\frac{G}{M_{vib} }\right)^T \Omega M_{vib} \nabla_{v,\eps}
\frac{G}{M_{vib} } \ra_{v,\eps}  
& = \frac{2}{\tau}
\la\left(e_{vib}M_{vib}-G\right)\frac{G}{e_{vib}M_{vib}}\ra_{v,\eps}
\\
&= - \frac{2}{\tau}
\la\left(e_{vib}M_{vib}-G\right)^2
      \frac{1}{e_{vib}M_{vib}}
\ra_{v,\eps}
+  \frac{2}{\tau}\la  e_{vib}M_{vib}-G\ra_{v,\eps} \\
& \leq \frac{2}{\tau}\la  e_{vib}M_{vib}-G\ra_{v,\eps} 
 = \frac{2}{\tau} (\rho e_{vib}-\la G\ra_{v,\eps}) = 0,
\end{split}
\end{equation*}
which comes from~\eqref{eq-mtsred} and $F=M_{vib}$. Therefore, we
obtain
\begin{equation*}
   \frac{1}{e_{vib}}\la \left(\nabla_{v,\eps}
\frac{G}{M_{vib} }\right)^T \Omega M_{vib} \nabla_{v,\eps}
\frac{G}{M_{vib} } \ra_{v,\eps}  \leq 0,
\end{equation*}
and again this gives $G = e_{vib}M_{vib}$,
which concludes the proof of the equilibrium property.

The proof of the H-theorem is much longer. First, by differentiation one gets that the quantity $\mathcal{D}(F,G)=\partial_t \mathcal{H}(F,G) +\divx\la vH(F,G)\ra_{v,\eps}$  satisfies:
\begin{eqnarray}
\mathcal{D}(F,G)&=&\la D_1H(F,G)(\partial_tF +v\cdot\nabla_x F )
                      +D_2H(F,G)(\partial_tG +v\cdot\nabla_x G)\ra_{v,\eps}\notag\\
&=& \la D_1H(F,G)D_F(F,G)+D_2H(F,G)D_G(F,G)\ra_{v,\eps} ,
\end{eqnarray}
from~(\ref{eq:eq_marginales_fvib}--\ref{eq:eq_marginales_gvib}).
Then the  proof is based on the convexity of $H(F,G)$: while for the BGK model we
only used the  first derivatives of $H$, we now use the
positive-definiteness of the Hessian matrix of $H$. To do so we
integrate by parts $\mathcal{D}(F,G)$ and multiply by $\tau$ so that:
\begin{eqnarray*}
\tau\mathcal{D}(F,G)&=&-\sum_{i=1}^3\la\partial_{v_i}(F)D_{11}H(F,G)\left(F(v_i-u_i)+RT\partial_{v_i} F\right)\ra_{v,\eps}\\
&&-2\la\partial_{\eps}(F)D_{11}H(F,G)\left(F\eps+RT\eps\partial_{\eps} F\right)\ra_{v,\eps}\\
&&-\sum_{i=1}^3\la\partial_{v_i}(G)D_{21}H(F,G)\left(F(v_i-u_i)+RT\partial_{v_i} F\right)\ra_{v,\eps}\\
&&-2\la\partial_{\eps}(G)D_{21}H(F,G)\left(F\eps+RT\eps\partial_{\eps} F\right)\ra_{v,\eps}\\
&&-\sum_{i=1}^3\la\partial_{v_i}(F)D_{12}H(F,G)\left(G(v_i-u_i)+RT\partial_{v_i} G\right)\ra_{v,\eps}\\
&&-2\la\partial_{\eps}(F)D_{12}H(F,G)\left(G\eps+RT\eps\partial_{\eps} G\right)\ra_{v,\eps}\\
&&-\sum_{i=1}^3\la\partial_{v_i}(G)D_{22}H(F,G)\left(G(v_i-u_i)+RT\partial_{v_i} G\right)\ra_{v,\eps}\\
&&-2\la\partial_{\eps}(G)D_{22}H(F,G)\left(G\eps+RT\eps\partial_{\eps} G\right)\ra_{v,\eps}\\
&&+2\la(e_{vib}(T)F-G)\frac 1{RT_0}\log\left(\frac{G}{RT_0F+G}\right)\ra_{v,\eps}
\end{eqnarray*}
To use the positive definiteness of the Hessian matrix $\mathbb{H}$ of
$H$, we introduce the following vectors:
\begin{align*}
&   V_i=(F(v_i-u_i)+RT\partial_{v_i} F,G(v_i-u_i)+RT\partial_{v_i} G) \\   
& E=(F\eps+RT\eps\partial_{\eps} F,G\eps+RT\eps\partial_{\eps} G),
\end{align*}
and we decompose the partial derivatives of $F$ and $G$ in factor of $D_{11}F$,
$D_{22}F$, $D_{12}F$ as follows:
\begin{align*}
& (\partial_{v_i}(F),\partial_{v_i}(G)) = \frac{1}{RT}V_i
  -(F\frac{v_i-u_i}{RT},G\frac{v_i-u_i}{RT}) \\
& (\partial_{\eps}(F), \partial_{\eps}(G)) = \frac{1}{\eps}E - (F\frac{1}{RT},G\frac{1}{RT}).
\end{align*}
This gives
\begin{eqnarray*}
\tau\mathcal{D}(F,G)&=&\sum_{i=1}^3\la\left(F\frac{v_i-u_i}{RT}\right)D_{11}H(F,G)\left(F(v_i-u_i)+RT\partial_{v_i} F\right)\ra_{v,\eps}\\
&&+2\la \left(F\frac{1}{RT}\right)D_{11}H(F,G)\left(F\eps+RT\eps\partial_{\eps} F\right)\ra_{v,\eps}\\
&&+\sum_{i=1}^3\la\left(G\frac{v_i-u_i}{RT}\right)D_{21}H(F,G)\left(F(v_i-u_i)+RT\partial_{v_i} F\right)\ra_{v,\eps}\\
&&+2\la\left(G\frac{1}{RT}\right)D_{21}H(F,G)\left(F\eps+RT\eps\partial_{\eps} F\right)\ra_{v,\eps}\\
&&+\sum_{i=1}^3\la\left(F\frac{v_i-u_i}{RT}\right) D_{12}H(F,G)\left(G(v_i-u_i)+RT\partial_{v_i} G\right)\ra_{v,\eps}\\
&&+2\la\left(f\frac{1}{RT}\right)D_{12}H(F,G)\left(g\eps+RT\eps\partial_{\eps} G\right)\ra_{v,\eps}\\
&&+\sum_{i=1}^3\la\left(G\frac{v_i-u_i}{RT}\right)D_{22}H(F,G)\left(G(v_i-u_i)+RT\partial_{v_i} G\right)\ra_{v,\eps}\\
&&+2\la\left(G\frac{1}{RT}\right)D_{22}H(F,G)\left(G\eps+RT\eps\partial_{\eps} G\right)\ra_{v,\eps}\\
&&- \sum_{i=1}^3\la V_i^T\mathbb{H}V_i\ra_{v,\eps}-2\la E^T\mathbb{H}E\ra_{v,\eps}\\
&&+2\la(e_{vib}(T)F-G)\frac 1{RT_0}\log\left(\frac{G}{RT_0F+G}\right)\ra_{v,\eps}
\end{eqnarray*}

Now this expression can be considerably simplified by using
property~(\ref{eq-FDG}), and we get
\begin{eqnarray*}
\tau\mathcal{D}(F,G)&=&\sum_{i=1}^3\la\left(\frac{v_i-u_i}{RT}\right)\left(F(v_i-u_i)+RT\partial_{v_i} F\right)\ra_{v,\eps}\\
&&+2\la \frac{1}{RT}\left(F\eps+RT\eps\partial_{\eps} F\right)\ra_{v,\eps}\\
&&-\sum_{i=1}^3 V_i^t\mathbb{H}V_i-E^t\mathbb{H}E\\
&&-2\la(e_{vib}(T)F-G)\frac 1{RT_0}\log\left(\frac{G}{RT_0F+G}\right)\ra_{v,\eps}.
\end{eqnarray*}
Then the first two terms are simplified by using an integration by
parts and relations~(\ref{eq-mtsred}) and~(\ref{eq-defT}) to get
\begin{eqnarray*}
\tau\mathcal{D}(F,G)&=& \frac{2}{RT}(\rho e_{vib}(T) - \la G\ra_{v,\eps}) \\
&&-\sum_{i=1}^3 V_i^t\mathbb{H}V_i-2E^t\mathbb{H}E\\
&&+2\la(e_{vib}(T)F-G)\frac 1{RT_0}\log\left(\frac{G}{RT_0F+G}\right)\ra_{v,\eps}.
\end{eqnarray*}
The terms with the Hessian are clearly negative, since $\mathbb{H}$ is
positive definite. Then we have
\begin{eqnarray*}
\tau\mathcal{D}(F,G)&\leq& \frac{2}{RT}(\rho e_{vib}(T) - \la G\ra_{v,\eps}) \\
&&+2\la(e_{vib}(T)F-G)\frac 1{RT_0}\log\left(\frac{G}{RT_0F+G}\right)\ra_{v,\eps}.
\end{eqnarray*}
Note that from~\eqref{eq-mtsred} the first term can be written as
\begin{equation*}
  \frac{2}{RT}(\rho e_{vib}(T) - \la G\ra_{v,\eps})  =
  \frac{2}{RT}\la e_{vib}(T) F-G\ra_{v,\eps}, 
\end{equation*}
and can be factorized with the second term to find
\begin{equation*}
\tau\mathcal{D}(F,G)\leq
2\la(e_{vib}(T)F-G)\left(\frac
  1{RT_0}\log\left(\frac{G}{RT_0F+G}\right) + \frac{1}{RT}\right)\ra_{v,\eps}.
\end{equation*}
We can now prove that the integrand of the right-hand side is non-positive. Indeed,
assume for instance that the first factor is non-positive, that is to
say $e_{vib}(T)F - G\leq 0$. By using
$e_{vib}(T)=\frac{RT_0}{e^{T_0/T}-1}$ (see definition~(\ref{eq-evib})), it is now very easy to prove the following relations
\begin{equation*}
 e_{vib}(T)F - G\leq 0 \Leftrightarrow   \frac{1}{T_0}\log\left(\frac{G}{RT_0F+G}\right) \geq -\frac{R}{T}
\end{equation*}
that is to say the second factor of the integrand is non-negative.

Consequently, we have proved $\tau\mathcal{D}(F,G)\leq 0$, 
which concludes the proof. 
\end{proof}

\section{Hydrodynamic limits for reduced models}
\label{sec:CE}

\correct{With a convenient nondimensionalization}, the relaxation time $\tau$ of the reduced BGK
model~\eqref{eq:eq_marginales_fvib}--\eqref{eq:eq_marginales_gvib} and
the Fokker-Planck model~(\ref{eq-FPF})-(\ref{eq-FPG}) is replaced by $\Kn \tau$, 
where $\Kn$ is the Knudsen number, which can be defined as a ratio
between the mean free path and a macroscopic length scale.
It is then possible to look for macroscopic models derived from BGK and
Fokker-Planck reduced models, in the asymptotic limit of small Knudsen
numbers.  

For convenience, these models are re-written below in non-dimensional
form. The BGK model is
\begin{align}
& \p_t F + v\cdot\nabla_x F  =  \frac{1}{\Kn\tau} \left( M_{vib}[F,G] - F \right) \, , \label{eq:BGKf}\\
&  \p_t G + v\cdot\nabla_x G  = \frac{1}{\Kn\tau}  ( \frac {\delta(T)} 2 T M_{vib}[F,G] - G ) \, , \label{eq:BGKg}    
\end{align}
where $M_{vib}[F,G]$ can be defined by~(\ref{eq-MvibFG}) with
$R=1$. Similarly, the relations~\eqref{eq-etr}--\eqref{eq-defT}
between the translational, internal, and total energies and the
temperature, have to be read with $R=1$ in  non-dimensional variables. \correct{We remind that $T$ is still a function of $F$ and $G$}. The Fokker-Planck model is
\begin{align}
&  \p_t F + v\cdot\nabla_x F  =  D_F(F,G),\label{eq-FPf}  \\
&  \p_t G + v\cdot\nabla_x G  =  D_G(F,G) \label{eq-FPg},     
\end{align}
with
\begin{equation}\label{eq-DFDGadim}   
\begin{split}
& D_F(F,G)=\frac{1}{\Kn\tau}\left(\divv\bigl((v-u)F+T\nabla_v
  F\bigr)+2\partial_{\eps}( \eps F+  T \eps\partial_{\eps} F)\right),\\
& D_G(F,G)= \frac{1}{\Kn\tau}\left(\divv\bigl((v-u)G+T\nabla_v
  G\bigr)+2\partial_{\eps}(\eps G+  T \eps\partial_{\eps} G)\right)
 +\frac2{\Kn\tau}\left(e_{vib}(T)F-G\right).
\end{split}
\end{equation}

\subsection{Euler limit}

In this section, we prove the following proposition:
\begin{proposition}
  The mass, momentum, and energy densities $(\rho, \rho u, E = \frac12 \rho
u^2 + \rho e)$ of the solutions of the
  reduced BGK and the Fokker-Planck models satisfy the equations
\begin{equation}\label{eq-euler} 
\begin{split}
& \dt \rho + \nabla_x \cdot \rho u = 0, \\
& \dt \rho u + \nabla_x\cdot  (\rho u\otimes u) + \nabla p = O(\Kn), \\
& \dt E + \nabla_x \cdot (E+p)u =O(\Kn),
\end{split}
\end{equation}
which are the Euler equations, up to $O(\Kn)$. The non-conservative
form of these equations is
\begin{equation}\label{eq-euler_non_cons} 
\begin{split}  & \dt \rho + \nabla_x \cdot \rho u = 0, \\
& \rho (\dt u + (u \cdot \nabla_x) u) + \nabla p = O(\Kn), \\
&  \dt T + u\cdot \nabla_x T+\frac{T}{c_v(T)}\divx u =O(\Kn),
\end{split}
\end{equation}
where $c_v(T)=\frac{d}{dT}e(T)$ is the \correct{specific} heat capacity  at constant volume.
\end{proposition}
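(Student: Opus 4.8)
The plan is to run a formal Chapman--Enskog (moment) argument that applies \emph{simultaneously} to both models, exploiting the fact that they share the same equilibrium and the same conservation structure.

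First I would record the exact local conservation laws. Testing the $F$-equation against $1$ and against $v$, and combining the $F$-equation tested against $\frac12|v|^2+\eps$ with the $G$-equation tested against $1$, all collision terms cancel: for the BGK model~\eqref{eq:BGKf}--\eqref{eq:BGKg} this follows from the definition of $M_{vib}[F,G]$ together with the energy relation~\eqref{eq-defT} (one checks that $\frac32 T+T+e_{vib}(T)-e=0$), and for the Fokker--Planck model~\eqref{eq-FPf}--\eqref{eq-FPg} it is exactly Proposition~\ref{prop:FP}. With no $O(\Kn)$ error in conservative form this gives
\begin{align*}
& \dt\rho + \divx(\rho u)=0,\\
& \dt(\rho u) + \divx\la v\otimes v\,F\ra_{v,\eps}=0,\\
& \dt E + \divx\la v(\tfrac12|v|^2+\eps)F + vG\ra_{v,\eps}=0,
\end{align*}
where $E=\frac12\rho|u|^2+\rho e$ and the moments are those of~\eqref{eq-mtsred}.

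Next I close the fluxes. Multiplying the kinetic equations by $\Kn\tau$ and assuming $F$, $G$ and their first derivatives are bounded uniformly in $\Kn$, the right-hand sides force $F=M_{vib}[F,G]+O(\Kn)$ and $G=e_{vib}(T)M_{vib}[F,G]+O(\Kn)$, using that $\frac{\delta(T)}{2}T=e_{vib}(T)$ by~\eqref{eq-evib} and~\eqref{eq-deltaT}. Substituting the equilibrium~\eqref{eq-MvibFG} (with $R=1$) into the fluxes, the momentum flux is the standard Gaussian moment $\la v\otimes v\,M_{vib}[F,G]\ra_{v,\eps}=\rho u\otimes u + pI$ with $p=\rho T$, while the energy flux, computed in the peculiar velocity $c=v-u$ using that $M_{vib}[F,G]$ is even in $c$ and factorises in $\eps$, gives $\la v(\frac12|v|^2+\eps+e_{vib}(T))M_{vib}[F,G]\ra_{v,\eps}=(E+p)u$; here the energy identities~\eqref{eq-etr}--\eqref{eq-evib} and~\eqref{eq-defT} are used to collect the bracketed coefficient into $e+T+\frac12|u|^2$, so that the flux equals $u(\rho e+\rho T+\frac12\rho|u|^2)=(E+p)u$. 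Inserting these closed fluxes produces the conservative Euler system~\eqref{eq-euler}.

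Finally I pass to the non-conservative form~\eqref{eq-euler_non_cons}. The continuity equation is unchanged; expanding $\divx(\rho u\otimes u)+\nabla p$ and eliminating $\dt\rho$ with the mass equation yields $\rho(\dt u+(u\cdot\nabla_x)u)+\nabla p=O(\Kn)$. For the temperature, I subtract the kinetic-energy balance from the energy equation to obtain the internal-energy balance $\rho(\dt e+u\cdot\nabla_x e)+p\divx u=O(\Kn)$, then use $e=e(T)$ with $c_v(T)=e'(T)$, so that $\dt e=c_v(T)\dt T$ and $\nabla_x e=c_v(T)\nabla_x T$, together with $p=\rho T$, to divide out $\rho c_v(T)$ and reach the stated temperature equation. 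I expect no essential obstacle, precisely because the two models feed into the same argument: they have identical equilibria and identical conserved moments, so the only genuine computation is the equilibrium energy flux, where the rotational ($\eps$) and vibrational ($e_{vib}(T)$) contributions must be tracked carefully through~\eqref{eq-etr}--\eqref{eq-evib} to recover exactly $(E+p)u$. The single delicate point is the formal status of the closure: the relation $F=M_{vib}[F,G]+O(\Kn)$ holds only under an implicit uniform-in-$\Kn$ regularity assumption on the solutions, which is standard for such hydrodynamic-limit statements.
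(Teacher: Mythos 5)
Your proposal is correct and follows essentially the same route as the paper: exact conservation laws from the collision invariants, the formal closure $F = M_{vib}[F,G]+O(\Kn)$ and $G = e_{vib}(T)M_{vib}[F,G]+O(\Kn)$ under the same uniform-in-$\Kn$ regularity assumption, evaluation of the fluxes at equilibrium, and then algebra to obtain the non-conservative form. The only differences are cosmetic: the paper writes the fluxes in peculiar-velocity form as $\sigma(F)$ and $q(F,G)$ and leaves the non-conservative reduction to the reader, whereas you work with the raw moment fluxes and carry out that reduction explicitly.
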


\begin{proof}
The reduced BGK
  model~\eqref{eq:eq_marginales_fvib}--\eqref{eq:eq_marginales_gvib}
is multiplied by $1$, $v$, and $\frac12 |v|^2 + \eps$ and integrated
with respect to $v$ and $\eps$, which gives the following conservation
laws:
\begin{equation}\label{eq-conslaws} 
\begin{split}
  & \dt \rho + \nabla_x \cdot \rho u = 0, \\
& \dt \rho u + \nabla_x\cdot  (\rho u\otimes u) + \nabla_x \sigma(F) = 0, \\
& \dt E + \nabla_x \cdot Eu + \nabla_x \cdot \sigma(F)   u + \nabla_x \cdot q(F,G)=0,
\end{split}
\end{equation}
where $\sigma(F) = \la (v-u)\otimes (v-u) F \ra_{v,\eps}$ is the stress
tensor, and $q(F,G) =  \la  (v-u)(\frac12 |v-u|^2 + \eps) F \ra_{v,\eps}
+\la  (v-u) G \ra_{v,\eps} $ is the heat flux.

When $\Kn$ is very small, if all the time and space derivatives of $F$
and $G$ are $O(1)$ with respect to $\Kn$,
then~(\ref{eq:BGKf})--(\ref{eq:BGKg})  imply
$F = M_{vib}[F,G] + O(\Kn)$ and $G = e_{vib}(T)
M_{vib}[F,G] +O(\Kn)$. Then it is easy to find that $\sigma(F) =
\sigma(M_{vib}[F,G]) + O(\Kn) = p I + O(\Kn)$ ,
where $I$ is the unit tensor, and $q(F,G) =
q(M_{vib}[F,G],e_{vib}(T)M_{vib}[F,G])+ O(\Kn) = O(\Kn)$, \correct{since the heat flux is zero at equilibrium},
which gives the Euler equations~(\ref{eq-euler_non_cons}).
The same analysis can be applied for the reduced Fokker-Planck model~(\ref{eq-FPf})--(\ref{eq-DFDGadim}).

Finally, the non conservative form is readily obtained from the
conservative form. Note another formulation of the energy equation that 
will be useful below: 
\begin{eqnarray}
 \dt e_{vib}(T)+ u\cdot \nabla_x
 e_{vib}(T)+\frac{Te_{vib}'(T)}{c_v(T)}\divx u =O(\Kn),
\end{eqnarray}
where $e_{vib}'(T)=\frac{d}{dT} e_{vib}(T)$.
\end{proof}


\subsection{Compressible Navier-Stokes limit}
In this section, we shall prove the following proposition:
\begin{proposition}
  The moments of the solution of the BGK and Fokker-Planck kinetic
  models~\eqref{eq:eq_marginales_fvib}-\eqref{eq:eq_marginales_gvib}
  and~\eqref{eq-FPF}-\eqref{eq-FPG} satisfy,
  up to $O(\Kn^2)$, the Navier-Stokes equations
\begin{equation}\label{eq-ns}
\begin{split}
& \dt \rho + \nabla \cdot \rho u = 0, \\
& \dt \rho u + \nabla\cdot  (\rho u\otimes u) + \nabla p = -\nabla
\cdot \sigma, \\
& \dt E + \nabla \cdot (E+p)u = -\nabla\cdot q - \nabla\cdot(\sigma u),
\end{split}
\end{equation}
where the shear stress tensor and the heat flux are given by
\begin{equation}  \label{eq-fluxes_ns}
\sigma = -\mu \bigl(\nabla u + (\nabla u)^T -\alpha\nabla\cdot
u\bigr), \quad \text{and} \quad  q=-\kappa \nabla \cdot T,
\end{equation}
and where the following values of the viscosity and heat
transfer coefficients (in dimensional variables) are
\begin{equation}  \label{eq-coef}
\begin{split}
& \mu = \tau p, \quad \text{and} \quad
\kappa = \mu c_p(T) \quad \text{for BGK}, \\
& \mu = \frac{1}{2}\tau p, \quad \text{and} \quad
\kappa = \frac{2}{3}\mu c_p(T) \quad \text{for Fokker-Planck},
\end{split}
\end{equation}
 while the volumic viscosity coefficient is $
 \alpha=\frac{c_p(T)}{c_v(T)}-1 $ for both models, and  $c_p(T)=\frac{d}{dT}(e(T)+p/\rho) = c_v(T) +R$ is the \correct{specific} heat capacity at constant pressure.
Moreover,
the corresponding Prandtl number is
\begin{equation}  \label{eq-PrM}
\Pr = \frac{\mu c_p(T)}{\kappa}=1 \quad \text{for BGK}, \quad \text{and} \quad \frac{3}{2}
\quad \text{for Fokker-Planck}.
\end{equation}

\end{proposition}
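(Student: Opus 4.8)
The plan is to carry out a Chapman--Enskog expansion of both reduced models about the local equilibrium. Both models satisfy exact local conservation laws of the form~\eqref{eq-conslaws} (for the Fokker--Planck model this follows from the conservation property of Proposition~\ref{prop:FP}), in which the only non-hydrodynamic quantities are the stress tensor $\sigma(F)=\la (v-u)\otimes(v-u)F\ra_{v,\eps}$ and the heat flux $q(F,G)=\la (v-u)(\tfrac12|v-u|^2+\eps)F\ra_{v,\eps}+\la (v-u)G\ra_{v,\eps}$. Hence the whole task reduces to evaluating $\sigma$ and $q$ up to first order in $\Kn$: at order $\Kn^0$ the Euler analysis already gives $\sigma=pI+O(\Kn)$ and $q=O(\Kn)$, so I would set $F=M_{vib}[F,G]+\Kn F^{(1)}+O(\Kn^2)$ and $G=e_{vib}(T)M_{vib}[F,G]+\Kn G^{(1)}+O(\Kn^2)$, determine the correctors $F^{(1)},G^{(1)}$, and insert them into $\sigma$ and $q$.

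For the BGK model the correctors are explicit. Writing~\eqref{eq:BGKf}--\eqref{eq:BGKg} as $M_{vib}[F,G]-F=\Kn\tau(\dt+v\cdot\nabla_x)F$ and inserting the leading-order values gives
\[
F^{(1)}=-\tau(\dt+v\cdot\nabla_x)M_{vib}[F,G],\qquad
G^{(1)}=-\tau(\dt+v\cdot\nabla_x)\bigl(e_{vib}(T)M_{vib}[F,G]\bigr)
\]
up to $O(\Kn)$. I would then eliminate the time derivatives using the Euler relations~\eqref{eq-euler_non_cons} together with the auxiliary balance for $e_{vib}(T)$ obtained in the previous proposition, so that $F^{(1)}$ and $G^{(1)}$ are expressed through spatial gradients of $u$ and $T$ only. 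Computing $\sigma$ and $q$ is then a matter of Gaussian integrals in $v$ and Gamma integrals in $\eps$: the traceless part of $\sigma$ yields the Newtonian tensor with $\mu=\tau p$, its trace yields the bulk term with $\alpha=c_p/c_v-1$ carrying the rotational and vibrational contributions, and the two pieces of $q$ (the $F$-part transporting translational and rotational energy, the $G$-part transporting vibrational energy) combine into a Fourier law with $\kappa=\mu c_p(T)$, the full temperature-dependent heat capacity; this gives $\Pr=1$.

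For the Fokker--Planck model the correctors are no longer explicit. Freezing the macroscopic parameters at their equilibrium values, $F^{(1)}$ solves $\mathcal L_F F^{(1)}=\tau(\dt+v\cdot\nabla_x)M_{vib}[F,G]$, where $\mathcal L_F$ is the drift--diffusion operator of~\eqref{eq-DFDGadim}, and $G^{(1)}$ solves the analogous equation with the extra relaxation term $\tfrac2\tau(e_{vib}(T)F-G)$. The computation is made tractable by the form $\mathcal L_F=\tfrac1\tau\nabla_{v,\eps}\cdot(\Omega M_{vib}\nabla_{v,\eps}(\,\cdot/M_{vib}))$ used in Proposition~\ref{prop:FP}: $\mathcal L_F$ is self-adjoint for the weight $M_{vib}$ and acts diagonally on the relevant moment functions, which are low-degree Hermite polynomials in $(v-u)/\sqrt{T}$ times low-degree Laguerre modes in $\eps$. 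Applying $\mathcal L_F$ to these, one finds that the stress mode (the traceless $(v-u)\otimes(v-u)$) relaxes with rate $2$ while the heat-flux modes $(v-u)(\tfrac12|v-u|^2)$ and $(v-u)(\eps-T)$ relax with rate $3$, instead of the single rate $1$ of the BGK operator. Inverting then produces $\mu=\tfrac12\tau p$ and $\kappa=\tfrac13\tau p\,c_p(T)=\tfrac23\mu c_p(T)$, hence $\Pr=\tfrac32$.

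The main obstacle is the heat flux, and specifically its vibrational part $\la(v-u)G^{(1)}\ra_{v,\eps}$: unlike in the BGK case, the $(v-u)$-odd component of $G^{(1)}$ is coupled to that of $F^{(1)}$ through the relaxation term $\tfrac2\tau(e_{vib}(T)F-G)$, and one must check that the velocity relaxation rate $1$ combines with this rate-$2$ exchange to give the same effective rate $3$ as the translational and rotational modes, so that all contributions to $q$ share the common factor $\tfrac13$ and assemble into a single Fourier law with the full $c_p(T)$. Once $\sigma$ and $q$ are obtained, substituting them into the conservation laws~\eqref{eq-conslaws} and discarding $O(\Kn^2)$ terms yields the Navier--Stokes system~\eqref{eq-ns} with the constitutive laws~\eqref{eq-fluxes_ns}, and reading off the coefficients gives~\eqref{eq-coef} and the Prandtl numbers~\eqref{eq-PrM} for each model.
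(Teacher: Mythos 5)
Your proposal is correct and follows essentially the same route as the paper: the additive Chapman--Enskog decomposition with explicit correctors for the BGK model, and, for the Fokker--Planck model, relative correctors determined through the eigenstructure of the linearized drift--diffusion operator (stress mode relaxing at rate $2$, heat-flux modes at rate $3$), yielding $\mu=\tau p$, $\kappa=\mu c_p(T)$ and $\mu=\tfrac12\tau p$, $\kappa=\tfrac23\mu c_p(T)$ respectively. The ``obstacle'' you flag for the vibrational heat flux is precisely the paper's identity $L_G(A,\tilde{A})=-3\tilde{A}$: the rate-$1$ velocity mode and the rate-$2$ exchange term $2(F_1-G_1)$ combine to give the common rate $3$, exactly as you anticipate.
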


\correct{Note that both models do not provide a correct Prandtl number, which can lead to errors for the computation of fluxes in numerical simulations. This is a usual problem with single parameter models like BGK or Fokker-Planck: this can be corrected by a modification of the models like with the ES-BGK or ES-FP approaches, as it has been done for polyatomic gases (see~\cite{esbgk_poly,Mathiaud2017} for instance)}.

\subsubsection{Proof for the BGK model}
\label{sec:proof-bgk-model}
The usual Chapman-Enskog method is applied as follows. We decompose
$F$ and $G$ as $F = M_{vib}[F,G] + \Kn F_1$ and $G = e_{vib}(T)
M_{vib}[F,G] +\Kn G_1$, which gives
\begin{equation*}
  \sigma(F) = p I + \Kn \sigma(F_1), \qquad \text{and} \qquad
q(F,G) = \Kn q (F_1,G_1).
\end{equation*}
Then we have to approximate $\sigma(F_1)$ and $q (F_1,G_1)$ up to
$O(\Kn)$. This is done by using the previous expansions
and~\eqref{eq:eq_marginales_fvib} and~\eqref{eq:eq_marginales_gvib} to
get
\begin{equation*}
\begin{split}
& F_1 = -\tau (\p_t M_{vib}[F,G] + v\cdot\nabla_x M_{vib}[F,G]) +O(\Kn), \\
& G_1 = -\tau (\p_t e_{vib}(T)M_{vib}[F,G] + v\cdot\nabla_x e_{vib}(T)M_{vib}[F,G]) + O(\Kn). \\
\end{split}
\end{equation*}
This gives the following approximations
\begin{equation}
  \sigma(F_1) = -\tau\la
(v-u)\otimes (v-u) (\p_t M_{vib}[F,G] + v\cdot\nabla_x M_{vib}[F,G])
  \ra_{v,\eps}   +O(\Kn),\label{eq-sigmaF1}
\end{equation}  
and
\begin{equation}
\begin{split}
 q(F_1,G_1) =&  -\tau  \la
(v-u)(\frac12 |v-u|^2 + \eps)(\p_t M_{vib}[F,G] + v\cdot\nabla_x M_{vib}[F,G])
 \ra_{v,\eps} \\
&  -\tau\la  
(v-u) (\p_t e_{vib}(T)M_{vib}[F,G] + v\cdot\nabla_x e_{vib}(T)M_{vib}[F,G] )
\ra_{v,\eps}
+ O(\Kn).\label{eq-qF1G1}
\end{split}
\end{equation}

Now it is standard to write $\dt M_{vib}[F,G]$ and $\nabla_x M_{vib}[F,G]$ as
functions of derivatives of $\rho$, $u$, and $T$, and then to use
Euler equations~\eqref{eq-euler} to write time derivatives as
functions of the space derivatives only. After some algebra, we get
\begin{equation}\label{eq-TMvib} 
 \dt\left(M_{vib}(F,G)\right) + v \cdot \nabla_x \left(M_{vib}(F,G)\right) = \frac{\rho}{T^{\frac{5}{2}}}M_0(V)e^{-J}\left(A \cdot \frac{\nabla
    T}{\sqrt{T}} + B : \nabla u \right)+ O(\Kn),
\end{equation}
where 
\begin{align*}
& V=\frac{v-u}{\sqrt{T}}, \qquad J=\frac{\eps} {T} , \qquad M_0(V) =
\frac{1}{(2\pi)^{\frac32}}\exp(-\frac{|V|^2}{2})\\
& A = \left(\frac{|V|^2}{2}+J-\frac{7}{2}\right)V, 
\qquad B = V\otimes V - \left(\frac{1}{c_v}\left(\frac12|V|^2+J\right)+\frac{e_{vib}'(T)}{c_{v}(T)}\right)I.
\end{align*}
Then we introduce~\eqref{eq-TMvib} into~\eqref{eq-sigmaF1} to get
\begin{equation*}
  \sigma_{ij}(F_1) = -\tau \rho T \la
  V_iV_jB_{kl}M_0e^{-J}\ra_{V,J}\partial_{x_l}u_k + O(\Kn),
\end{equation*}
where we have used the change of variables $(v,\eps)\mapsto (V,J)$ in the
integral (the term with $A$ vanishes due to the parity of
$M_0$). Then standard Gaussian integrals (see~appendix~\ref{app:CE}) give
\begin{equation*}
  \sigma(F_1) = -\mu \left(\nabla u + (\nabla u)^T
      -\alpha \nabla \cdot u \, I\right) + O(\Kn), 
\end{equation*}
with $\mu = \tau \rho T$ and $\alpha= \frac{c_p}{c_v}-1$, 
which is the announced result, in a non-dimensional form.

For the heat flux, we use the same technique. First
for $e_{vib}(T)M_{vib}[F,G]$ we obtain
\begin{equation}\label{eq-TevibMvib} 
 \dt\left(e_{vib}M_{vib}(F,G)\right) + v \cdot \nabla_x \left(e_{vib}M_{vib}(F,G)\right) = \frac{\rho}{T^{\frac{3}{2}}}M_0(V)\left(\tilde A \cdot \frac{\nabla
    T}{\sqrt{T}} + \tilde B : \nabla u\right) + O(\Kn),
\end{equation}
where
\begin{align*}
&   \tilde A = \left(\frac{|V|^2}{2}+J-\frac{7}{2}+\frac{Te_{vib}'(T)}{e_{vib}}\right)V ,\\
&  \tilde B = V\otimes V - \left(\frac{1}{c_v}\left(\frac12|V|^2+J\right)+\frac{e_{vib}'(T)}{c_{v}(T)}+\frac{Te_{vib}'(T)}{c_v(T)e_{vib}}\right)I.
\end{align*}
Then $q(F_1,G_1)$ as
given in~\eqref{eq-qF1G1} can be reduced to
\begin{equation*}
\begin{split}
q_i(F_1,G_1) & = 
-\tau \rho T\left( 
\la  \frac{1}{2}|V|^2V_iA_jM_0e^{-J}\ra_{V,J}
+ \la  V_i J A_jM_0e^{-J}\ra_{V,J}
\right)\partial_{x_j}T \\
& \qquad 
- \tau \rho 
 \la  V_i\tilde{A}_jM_0e^{-J}\ra_{V,J}\partial_{x_j}T.
\end{split}
\end{equation*}
Using again Gaussian integrals , we get
\begin{equation*}
q(F_1,G_1)=-\kappa\nabla_x T,  
\end{equation*}
where $\kappa = \mu c_p(T)$ with  $c_p(T)= \frac{d}{dT}(e(T) + \frac{p}{\rho}) = \frac{7}{2} +
e_{vib}'(T)= 1 + c_v(T)$ in a non-dimensional form.

\subsubsection{Proof for the Fokker-Planck model}

Here, we rather use the decomposition $F = M_{vib}(1 + \Kn F_1)$ and $G = e_{vib}
M_{vib} (1+\Kn G_1)$, which gives 
\begin{equation*}
  \sigma(F) = p I + \Kn \sigma(M_{vib}F_1) \quad \text{and} \quad
  q(F,G) = \Kn q (M_{vib}F_1,e_{vib}M_{vib}G_1),
\end{equation*}
in which, for clarity, the dependence of $M_{vib}$ on $F$ and $G$ has been omitted,
and the dependence of $e_{vib}$ on $T$ as well.
Finding $F_1$ and $G_1$ is less simple than for the BGK
model: however, the computations are very close to what is done in the
standard monatomic Fokker-Planck model (see~\cite{Mathiaud2016} for instance), so
that we only give the main steps here (see appendix~\ref{app:CE} for
details).

First, the decomposition is injected into~(\ref{eq-DFDGadim}) to get
\begin{align*}
&  D_F(F,G) = \frac{1}{\tau}M_{vib}L_F(F_1) + O(\Kn) , \\
&  D_G(F,G) = \frac{1}{\tau}e_{vib}M_{vib}L_G(F_1,G_1) + O(\Kn) , 
\end{align*}
where $L_F$ and $L_G$ are linear operators defined by
\begin{equation} \label{eq-LFLG} 
\begin{split}
& L_F(F_1) = \frac{1}{M_{vib}}\Bigl(\divv (TM_{vib}\nabla_v F_1)+\partial_{\eps} \left(2 T {\eps}M_{vib}\partial_{\eps} F_1\right)\Bigr),\\
& L_G(F_1,G_1) = \frac{1}{e_{vib}M_{vib}}
 \Bigl( \divv (Te_{vib}M_{vib}\nabla_v G_1)
+2\partial_{\eps} \left( T {\eps}e_{vib}M_{vib}\partial_{\eps} G_1\right)+2(F_1-G_1)
 \Bigr).
\end{split}
\end{equation}

Then the Fokker-Planck equations~(\ref{eq-FPf})-(\ref{eq-FPg}) suggest to look
for an approximation of $F_1$ and $G_1$ up to $O(\Kn)$ as solutions of 
\begin{align*}
&  \p_t M_{vib} + v\cdot\nabla_x M_{vib} =
  \frac{1}{\tau}M_{vib}(F,G)L_F(F_1) \\
& \p_t e_{vib}M_{vib} + v\cdot\nabla_x e_{vib}M_{vib}
 =   \frac{1}{\tau}e_{vib}M_{vib}(F,G)L_G(F_1,G_1).
\end{align*}
By using~(\ref{eq-TMvib})-(\ref{eq-TevibMvib}), these relations are
equivalent, up to
another $O(\Kn)$ approximation, to 
\begin{equation}\label{eq-F1G1FP} 
L_F(F_1) = \tau \left(A \cdot \frac{\nabla T}{\sqrt{T}} 
+ B:\nabla u\right), 
\quad \text{ and } \quad
 L_G(F_1,G_1) = \tau \left(\tilde{A} \cdot \frac{\nabla T}{\sqrt{T}} 
+ \tilde{B}:\nabla u\right),
\end{equation}
where $A$, $B$, $\tilde{A}$, and $\tilde{B}$ are the same as for the
BGK equation in the previous section.

Now, we rewrite $L_F(F_1)$ and $L_G(F_1,G_1)$, defined in~\eqref{eq-LFLG}, by using the change of
variables $V = \frac{v-u}{\sqrt{T}}$ and $G= \frac{\eps}{T}$ to get
\begin{equation*}
\begin{split}
& L_F(F_1) = -V\cdot \nabla_V F_1 + \nabla_V \cdot (\nabla_V F_1) 
+ 2\left( (1-J)\partial_J F_1 + J\partial_{JJ}F_1  \right) , \\
& L_G(F_1,G_1) = L_F(G_1) + 2(F_1-G_1).
\end{split}
\end{equation*}
Then simple calculation of derivatives show that $A$, $B$, $\tilde{A}$, and $\tilde{B}$ satisfy the
following properties
\begin{align*}
&   L_F(A) = -3 A, \qquad L_F(B) = -2 B, \\
&   L_G(A,\tilde{A}) = -3 \tilde{A}, \qquad L_G(B,\tilde{B}) = -2 \tilde{B}. 
\end{align*}
Therefore, we look for $F_1$ and $G_1$ as solution
of~(\ref{eq-F1G1FP}) under the following form
\begin{equation*}
  F_1 = a A \cdot \frac{\nabla T}{\sqrt{T}} + b B:\nabla u \quad
  \text{ and } \quad 
   G_1 = \tilde{a} \tilde{A} \cdot \frac{\nabla T}{\sqrt{T}} + \tilde{b} \tilde{B}:\nabla u ,
\end{equation*}
and we find $\tilde{a}=a=-1/3$ and $\tilde{b}=b=1/2$.

Finally, using these relations into $\sigma$ and $q$ and using some
Gaussian integrals (see appendix~\ref{app:CE}) give
\begin{equation*}
  \sigma(M_{vib}F_1) = -\mu \left(\nabla u + (\nabla u)^T
      -\alpha \nabla \cdot u \, I\right) \quad \text{ and } \quad 
   q (M_{vib}F_1,e_{vib}M_{vib}G_1) = -\kappa\nabla_x T,
\end{equation*}
where $\alpha= \frac{c_p}{c_v}-1$, $\mu = \frac{\tau}{2}\rho T$, and
$\kappa = \frac{2}{3}\mu c_p(T)$, which is the announced result, in a
non-dimensional form.

\section{Conclusion}
\label{sec:conclusion}

In this paper, we have proposed to different models (BGK and
Fokker-Planck) of the Boltzmann equation that allow for vibrational
energy discrete modes. These models satisfy the conservation and
entropy property (H-theorem), and the vibration energy variable can be
eliminated by the usual reduced distribution function. The low
complexity of the reduced BGK model can make it attractive to be implemented
in a deterministic code, while the Fokker-Planck model can be easily
simulated with a stochastic method.

Of course, since these models are based on a single time relaxation,
they cannot allow for multiple relaxation times scales. This is not
physically correct, since it is known that the relaxation times for
translational, rotational, and vibrational energies are very
different. However, standard procedures can be used to extend our
model, like the ellipsoidal-statistical approach, already used to
correct the Prandtl number of the BGK model~\cite{esbgk_poly} and
Fokker-Plank models~\cite{Mathiaud2016,Mathiaud2017}.

\appendix
\section{Gaussian integrals and other summation formulas}
\label{app:CE}

In this section, we give some integrals and summation formula that are
used in the paper.

First, we remind the definition of the absolute Maxwellian $M_0(V) =
\frac{1}{(2\pi)^{\frac32}}\exp(-\frac{|V|^2}{2})$. We denote by
$\cint{\phi} = \int_{\R^3}\phi(V)\, dV$ for any function $\phi$. It is standard to
derive the following integral relations (see~\cite{chapmancowling},
for instance), written with the Einstein notation:
\begin{align*}
&   \cint{M_0}_V = 1, \\
&   \cint{V_iV_jM_0}_V = \delta_{ij}, \qquad \cint{V_i^2M_0}_V = 1,
  \qquad \cint{|V|^2M_0}_V = 3, \\
& \cint{V_iV_jV_kV_lM_0}_V = \delta_{ij}\delta_{kl}  +
  \delta_{ik}\delta_{jl}  + \delta_{il}\delta_{jk} , \qquad \cint{V_i^2V_j^2M_0}_V = 1 + 2\, \delta_{ij} \\
& \cint{V_iV_j|V|^2M_0}_V = 5 \,\delta_{ij},  \qquad \cint{|V|^4M_0}_V
  = 15, \\
& \cint{V_iV_j|V|^4M_0}_V = 35 \,\delta_{ij},  \qquad \cint{|V|^6M_0} = 105,
\end{align*}
while all the integrals of odd power of $V$ are zero. \correct{Note that the first relation of each line implies the other relations of the same line: these relations are given here to improve the readability of the paper}.
From the previous Gaussian integrals, it can be shown that for any
$3\times 3$ matrix $C$, we have
\begin{equation*}
\cint{V_iV_jC_{kl}V_kV_lM_0}_V = C_{ij} + C_{ji} + C_{ii}\delta_{ij}.
\end{equation*}

Finally, we have also used the following relations:
\begin{equation*}
  \int_{0}^{+\infty} J e^{-J} \, dJ =   \int_{0}^{+\infty} e^{-J} \, dJ = 1,
\end{equation*}
and also
\begin{equation*}
\sum_{i=0}^{+\infty} e^{-iT_0/T} = \frac{1}{1-e^{-T_0/T}}
 \quad 
\text{ and } \quad 
\sum_{i=0}^{+\infty} i e^{-iT_0/T} = \frac{e^{-T_0/T}}{(1-e^{-T_0/T})^2}.
\end{equation*}

 \bibliographystyle{unsrt}
 \bibliography{biblio}

\begin{thebibliography}{10}

\bibitem{bird}
G.~A. Bird.
\newblock {\em Molecular Gas Dynamics and the Direct Simulation of Gas Flows}.
\newblock Oxford Engineering Science Series, 2003.

\bibitem{BS_2017}
Thomas E.~Schwartzentruber Iain D.~Boyd.
\newblock {\em Nonequilibrium Gas Dynamics and Molecular Simulation}.
\newblock Cambridge Aerospace Series. Cambridge University Press, 2017.

\bibitem{dimarco_pareschi_2014}
G.~Dimarco and L.~Pareschi.
\newblock Numerical methods for kinetic equations.
\newblock {\em Acta Numerica}, 23:369–520, 2014.

\bibitem{luc_2014}
Luc Mieussens.
\newblock A survey of deterministic solvers for rarefied flows (invited).
\newblock {\em AIP Conference Proceedings}, 1628(1):943--951, 2014.

\bibitem{bgk}
E.P. Gross, P.L. Bhatnagar, and M.~Krook.
\newblock A model for collision processes in gases.
\newblock {\em Physical review}, 94(3):511--525, 1954.

\bibitem{Chu_1965}
C.~K. Chu.
\newblock Kinetic-theoretic description of the formation of a shock wave.
\newblock {\em Phys. Fluids}, 8(1):12, 1965.

\bibitem{Struchtrup_moment_book}
H.~Struchtrup.
\newblock {\em Macroscopic Transport Equations for Rarefied Gas Flows
  Approximation Methods in Kinetic Theory}.
\newblock Interaction of Mechanics and Mathematics. Springer, 2005.

\bibitem{cercignani}
C.~Cercignani.
\newblock {\em {T}he {B}oltzmann {E}quation and {I}ts {A}pplications},
  volume~68.
\newblock Springer-Verlag, Lectures Series in Mathematics, 1988.

\bibitem{Grj2011}
M.H. Gorji, M.~Torrilhon, and Patrick Jenny.
\newblock {F}okker-{P}lanck model for computational studies of monatomic
  rarefied gas flows.
\newblock {\em Journal of fluid mechanics}, 680:574--601, August 2011.

\bibitem{Holway}
Jr. Lowell H.~Holway.
\newblock New statistical models for kinetic theory: Methods of construction.
\newblock {\em Physics of Fluids}, 9(9):1658--1673, 1966.

\bibitem{esbgk_poly}
P.~Andries, P.~Le~Tallec, J.-P. Perlat, and B.~Perthame.
\newblock The {G}aussian-{BGK} model of boltzmann equation with small prandtl
  number.
\newblock {\em Eur. J. Mech. B-Fluids}, pages 813--830, 2000.

\bibitem{S_model}
E.~M. Shakhov.
\newblock Generalization of the {K}rook relaxation kinetic equation.
\newblock {\em Izv. Akad. Nauk SSSR. Mekh. Zhidk. Gaza}, 1(5):142--145, 1968.

\bibitem{Grj2013}
M.~Hossein Gorji and Patrick Jenny.
\newblock {A} {F}okker-{P}lanck based kinetic model for diatomic rarefied gas
  flows.
\newblock {\em Physics of fluids}, 25(6):062002--, June 2013.

\bibitem{Mathiaud2016}
J.~Mathiaud and L.~Mieussens.
\newblock A {F}okker{\textendash}{P}lanck model of the {B}oltzmann equation
  with correct {P}randtl number.
\newblock {\em Journal of Statistical Physics}, 162(2):397--414, Jan 2016.

\bibitem{Mathiaud2017}
J.~Mathiaud and L.~Mieussens.
\newblock A {F}okker{\textendash}{P}lanck model of the {B}oltzmann equation
  with correct {P}randtl number for polyatomic gases.
\newblock {\em Journal of Statistical Physics}, 168(5):1031--1055, Sep 2017.

\bibitem{RS_2014}
Behnam Rahimi and Henning Struchtrup.
\newblock Capturing non-equilibrium phenomena in rarefied polyatomic gases: A
  high-order macroscopic model.
\newblock {\em Physics of Fluids}, 26(5):052001, 2014.

\bibitem{WYLX_2017}
Zhao Wang, Hong Yan, Qibing Li, and Kun Xu.
\newblock Unified gas-kinetic scheme for diatomic molecular flow with
  translational, rotational, and vibrational modes.
\newblock {\em Journal of Computational Physics}, 350:237 -- 259, 2017.

\bibitem{ARS_2017}
Takashi Arima, Tommaso Ruggeri, and Masaru Sugiyama.
\newblock Rational extended thermodynamics of a rarefied polyatomic gas with
  molecular relaxation processes.
\newblock {\em Phys. Rev. E}, 96:042143, Oct 2017.

\bibitem{KKA_2019}
S.~Kosuge, Hung-Wen Kuo, and Kazuo Aoki.
\newblock A kinetic model for a polyatomic gas with temperature-dependent
  specific heats and its application to shock-wave structure.
\newblock submitted, 2019.

\bibitem{anderson}
J.~D. Anderson.
\newblock {\em Hypersonic and high-temperature gas dynamics second edition}.
\newblock American Institute of Aeronautics and Astronautics, 2006.

\bibitem{Morse_1964}
T.~F. Morse.
\newblock Kinetic model for gases with internal degrees of freedom.
\newblock {\em Phys. Fluids}, 7(159), 1964.

\bibitem{HH_1968}
A.~B. Huang and D.~L. Hartley.
\newblock Nonlinear rarefied couette flow with heat transfer.
\newblock {\em Phys. Fluids}, 11(6):1321, 1968.

\bibitem{Mathiaud2018}
C.~Baranger, G.~Marois, J.~Mathé, J.~Mathiaud, and L.~Mieussens.
\newblock A {BGK} model for high temperature rarefied gas flows.
\newblock {\em Work in progress}, 2018.

\bibitem{chapmancowling}
S.~Chapman and T.G. Cowling.
\newblock {\em The mathematical theory of non-uniform gases}.
\newblock Cambridge University Press, 1970.

\end{thebibliography}

\end{document}